\definecolor{main}{HTML}{5989cf}    
\definecolor{sub}{HTML}{cde4ff}     
\newtcbox{\mcnerney}[1][]{%
    nobeforeafter, math upper, tcbox raise base,
    enhanced, colframe=blue!30!,
    colback=white, boxrule=1pt,
    #1}
\newtcbox{\new}[1][]{%
    nobeforeafter, math upper, tcbox raise base,
    enhanced, colframe=red!30!,
    colback=white, boxrule=1pt,
    #1}
\newtcolorbox{boxB}{
    boxrule = 1.5pt,
    colframe = main,
    rounded corners,
    arc = 5pt   
}
\newcommand\rp{\mathbb{R}_{>0}}
\newcommand\leigen{\kappa^{\text{max}}}
\newcommand\ind{\mathcal{I}}    
\newcommand\pref{\kappa}        
\newcommand\exoindus{\Psi}      
\newcommand\g{\hat{gdp}}        
\newcommand\defn[1]{\textbf{\textit{#1}}}
\newcommand\pdv[2]{\frac{\partial#1}{\partial#2}}
\newcommand\pdvtwice[2]{\frac{\partial^2 #1}{\partial#2^2}}
\newcommand\dv[2]{\frac{d #1}{d #2}}
\newcommand\domar{\theta}
\newcommand\OM{\mathcal{L}}
\newcommand\pchange{\hat{p}}    
\newcommand\rpchange{\hat{r}}   
\newcommand\fc{S_{tot}}
\newcommand\ls{\tilde{\ell}}
\newcommand\s{a_{ij}}
\newcommand\indprod{Y}
\newcommand\expeccond[2]{\mathbb{E} \left[ #1 | #2 \right]}
\newtheorem{theorem}{Theorem}
\newtheorem{corollary}{Corollary}
\newtheorem{prop}{Proposition}
\newtheorem{lemma}{Lemma}
\theoremstyle{definition}
\newtheorem{definition}{Definition}
\begin{document}

\title{Interdependent Total Factor Productivity in an Input-Output model}


\author[1,2]{\fnm{Thomas M.} \sur{Bombarde}}\email{thomas.bombarde@st-hughs.ox.ac.uk}

\author*[1]{\fnm{Andrew L.} \sur{Krause}}\email{andrew.krause@durham.ac.uk}

\affil*[1]{\orgdiv{Department of Mathematical Sciences}, \orgname{Durham University}, \orgaddress{\street{Upper Mountjoy Campus, Stockton Road}, \city{Durham}, \postcode{DH1 3LE}, \country{United Kingdom}}}

\affil[2]{\orgdiv{St Hugh's College}, \orgname{University of Oxford}, \orgaddress{\street{St Margaret's Rd}, \city{Oxford}, \postcode{OX2 6LE}, \country{United Kingdom}}}


\abstract{ABSTRACT TEXT HERE}

\abstract{ Industries learn productivity improvements from their suppliers. The observed empirical importance of these interactions, often omitted by input-output models, mandates larger attention. This article embeds interdependent total factor productivity (TFP) growth into a general non-parametric input-output model. TFP growth is assumed to be Cobb-Douglas in TFP-stocks of adjacent sectors, where elasticities are the input-output coefficients. Studying how the steady state of the system reacts to changes in research effort bears insight for policy and the input-output literature. First, industries higher in the supply chain see a greater multiplication of their productivity gains. Second, the presence of `laggard' industries can bottleneck the the rest of the economy. By deriving these insights formally, we review a canonical method for aggregating TFP -- Hulten's Theorem -- and show the potential importance of backward linkages.}

%
%
%

\keywords{Input-output modelling, total factor productivity}



\maketitle

\section{Introduction}

 In the long-run, economies grow by increasing total factor productivity (TFP). By changing the nature of industries and the interrelations between them, economies tap into new opportunities for growth. We motivate this paper by discussing the questions in economic development that input-output (IO) models help answer, the recent renew of interest in IO economics, and the observations of TFP-growth interacting with IO linkages. 

Early development economics argued for changes in the nature of industries as a precondition for take-off \citep{Monga2019StructuralDestiny}. The challenge is to qualify this nature. A neat distinction between agriculture, manufacturing, and services as in early models \citep{Kuznets1955} is less pertinent to modern economies, where high-skill service industries may play leading sector roles previously attributed to manufacturing, and informal manufacturing may serve as reserves of surplus-labour that were previously attributed to agriculture \citep{Lamba2020DynamismDevelopment, Mensah2022StructuralAfrica}. At the same time, premature deindustrialisation of developing economies calls for a finer understanding of the interaction between industrial specialisations and economic growth \citep{Rodrik2016, HausmannRodrikEconDevasSelfDiscovery, Lin2013LessonsTransformation}. Beyond what industries \textit{do}, defining how important an industry is to the economy at a state in time should consider its place in the supply chain. 

Input-Output (IO) economics provides quantitative measures on an industry's economic importance.  It assumes that an economy can be divided into industries characterised by distinct output quantities. `Technical coefficients' $a_{ij}$, the quantity of input from the $j$th industry required per unit of output of the $i$th industry, are stored in an adjacency matrix that describes a network of industries \citep{Raa2006TheInefficiency}. The output-multiplier is a centrality metric on this network for the position of an industry in the supply chain \citep{Mcnerney2021HowGrowth}.  Hulten's Theorem \citep{Hulten1978GrowthInputs} abstracts from these dependencies to aggregate TFP growth rate as the sum of industry-level TFP growth rates weighted by the their `Domar weights', the ratio of industry sales to final consumption. If $\g$ is the growth rate of final consumption (henceforth `economic growth'), $\domar_i$ the Domar weight of an industry $i=1,...,n$, and $\gamma_i$ its growth rate of TFP, then we formulate Hulten's result as
\begin{equation}\label{Hultens_Result}
    \pdv{\g}{\gamma_i} = \domar_i.
\end{equation}
In this view, the Domar weights of industries are sufficient to understanding the impact of industry-level TFP-shocks on economic growth. In short, eq. \eqref{Hultens_Result} suggests that IO dependencies are irrelevant to development policy. 

Recent literature explores the role of IO dependencies in propagating economic shocks. \cite{Gabaix2009} and \cite{Acemoglu2012TheFluctuations} use Hulten's Theorem to show how a fat-tailed distribution of Domar weights exposes the economy to shocks cascading from central industries. However, as  commented by \cite{Baqaee2019}, Hulten's Theorem is strikingly unintuitive, as it suggests that deleting the supermarket chain Walmart from the US economy would have the same effect as deleting all electricity providers, their Domar weights being roughly equal. Most challenges to Hulten's Theorem focus on second-order effects the theorem omits. \cite{Acemoglu2017} model a fat-tailed distribution of shocks, second order effects, and Domar weights that trigger economic collapse. \cite{Baqaee2019} assess susceptibility to economic disaster through second-order effects, but do so even for smaller shocks that are symmetric across industries by using a range of production functions. In contrast, this article breaks Hulten's Theorem in the first-order by considering interdependent TFP, rather than introducing mark-ups, or industry goods as substitutes to labour as in \cite{jackson2019automation}. While the previous studies renew interest in IO economics, they concentrate on supply- and demand- between industries,  and thus differ from this report's interest on the interaction between industrial structure, TFP, and economic growth.

\cite{Mcnerney2021HowGrowth} formalise the relationship between IO structure and economic growth. Assuming a non-parametric production function with constant returns to scale, the authors show how the length of its supply chains accelerates the rate of price declines. These price declines drive economic growth. A note of interest is the correlation between a key measure of supply chain length -- the output multiplier -- and productivity growth, suggesting that some level of interdependence between both variables may be at play. Similarly \cite{Shuoshuo} finds a significant relationship between centrality measures for industries and real GDP growth for the US economy from 1970-2017. This relationship holds controlling for Domar weights, which suggests a relationship between centrality and economic growth beyond Hulten's Theorem. Building on \cite{Baqaee2019}, \cite{Shuoshuo} considers second order effects of productivity shocks to show that two economic networks with the same Domar weights can witness different effects on GDP of industry-level TFP shocks. However, \cite{Shuoshuo} models TFP as a random walk and in doing so maintains productivity growth to be independent of IO linkages. Instead, \cite{Mcnerney2021HowGrowth}'s correlation between industry-level TFP and Katz centrality suggests that the IO network may directly affect TFP growth. We consider how such a relationship modifies eq. \eqref{Hultens_Result} in a first order approximation.

The role IO networks play in productivity growth is a growing area of IO economics. An early exception is \cite{Wolff2000EnginesRaa}'s investigation of the subtle productivity-enhancements of the US 1990s Information and Communications Technology (ICT) revolution. Specifying a model of interdependent productivity improvements, the authors establish a multiplier effect in the returns to R\&D of industries. They impute ICT as the leading productivity growth sector of the US in the 1980s. However, because they specify a model of TFP growth rates directly without a link between productivity growth of GDP, the authors shy away from assessing how productivity shocks affect economic growth. \cite{acemoglu2023bottlenecks} provide a competing explanation for the US productivity slowdown by considering TFP improvements to dependent on IO links, stressing the role of `laggard' sectors. We follow this line of thought, but contextualise model interdependent TFP within Hulten's Theorem and the relationship to the output multiplier.

We use \cite{Mcnerney2021HowGrowth}'s IO model to relate productivity growth rates to economic growth through price changes. We also take labour as only factor-input, which might underestimate interactions such as learning by doing from capital accumulation, but helps focus on intermediate inputs. To endogenise productivity, we differ from \cite{Wolff2000EnginesRaa} and instead emulate \cite{pichler2018technological}'s discussion of knowledge creation, in the sense that we first describe growth of stocks, then derive rates. This framework makes our arguments less empirically tractable than \cite{Wolff2000EnginesRaa}, but allows us to theoretically link productivity growth to the output multiplier. We find a positive relationship to be conditional on  a high level of interdependence in input. Moreover, we show the relationship in eq. \eqref{Hultens_Result} to be a special case of our model. Our caveat to Hulten's aggregation through Domar weights comes from potentially `laggard' sectors as in \cite{acemoglu2023bottlenecks}, though we focus on how this relates to the output multiplier, and emphasise the importance of backward linkages.

 Our model stresses that the industries of highest return to investment in R\&D for economic growth depend on both existing IO relations and intra-sector returns to investment. We see this as formalising two policy insights. First, industries higher in the supply chain may be preferred over those that have higher intra-industry returns to R\&D, but fewer downstream buyers. Second, the presence of `laggard' industries within the economy may have important detrimental effects on the rest of the economic environment, which cannot be considered in isolation. Overall, this article lays out a theoretical mechanism for the importance of TFP interdependencies. It articulates this with a toy model, but does not lend itself to empirical estimation.
 
 The rest of the paper is organized as follows.  Section \ref{S:StandardIO} describes the IO model as in \cite{Mcnerney2021HowGrowth}. In Section \ref{S:interdepTFP} we develop our model of interdependent TFP growth. Section \ref{S:HultenExperiment} investigates the relationship between productivity improvements and output multiplier. Finally, Section \ref{Cobb-Douglas} studies a Cobb-Douglas production function to consider how our model's response to TFP shocks differs qualitatively from \cite{Hulten1978GrowthInputs}. Section \ref{S:Discussion} concludes with a discussion of the implications and limitations of our work.

 Unless specified otherwise, all scalars are in $\rp$, and $i, j \in \{1,...,n\}$. We use $\dot{ }$ to denote a time ($t$) derivative, and $\hat{ }$ to denote a  growth rate, e.g. $\hat{h} = \dot{h}/h = \pdv{h}{t} / h$. When placed over a vector, these refer to component-wise derivatives. Proofs are left to appendices.

\section{The Standard Input-Output Model}
\label{S:StandardIO}

In this section, we set the core definitions of an unparameterised IO model as in \cite{Mcnerney2021HowGrowth}. We first state the assumptions of our IO model, which are also summarised in Appendix \ref{Assumptins_Appendix}.  We then derive a corollary of Hulten's Theorem given in eq. \eqref{Hultens_Result}. This corollary describes a direct relationship between the response of economic growth to industry-level TFP growth and the industry's ratio of sales to final consumption. 

Consider $N$ industries, each defined by a production function $F_i$, for $i=1,...,N$. 
The production function of each industry is a map from inputs to a quantity of its unique good. We formalise this by $F_i(X_{1i},...,X_{Ni}, L_i, Z_i)$ where $(X_{1i},...,X_{Ni}) \equiv \boldsymbol{X}_i$ denotes all intermediate inputs from industries 1 through N to industry $i$, and $L_i$ is the units of labour consumed by industry $i$ from a fixed stock $L = \sum_{j=1}^N L_j$.  The term $Z_i$ scales these inputs by the industry's TFP, and is expected to increase in time with a growth rate $\gamma_i \equiv \hat{Z}_i$. We assume diminishing marginal returns (\textbf{A1}), expressed as $\pdv{F_i}{k} > 0, \text{ }\pdvtwice{F_i}{k} < 0$ for each ${k} \in \{X_{1i},...,X_{ni}, L_i\}$. We follow this with the stronger condition that $F_i$ has constant returns to scale to physical inputs (\textbf{A2}), so that doubling all inputs doubles output. That is, for any integer $b$,
\begin{equation}
F_i(bX_{i1},...,b L_i,Z_i) = b F_i(X_{i1},...,L_i,Z_i).
\end{equation}
The interest in these assumptions lie in the result that the elasticities
of $F_i$ with respect to all inputs $k$ then sum to $1$ \citep{Carvalho2019}, which is crucial for the following analysis of intermediate input flows. That is,
\begin{equation}\label{F_Elasticity_Result}
    \sum_{k \in {X_{1i}, ..., X_{1N}, L_i}}\pdv{\ln F_i}{\ln k} = 1.
\end{equation}
Whilst the production function describes output given a set of inputs, observed output $Y_i$ is determined by the input quantities that maximise the industry's total revenue -- its output times its good's unit price $p_i$ -- subject to the industry's budget constraint. That is, we assume $Y_i$ is the value function
\begin{align}
 Y_i \equiv &\text{max}_{\boldsymbol{X}_{i}, L_i} {F_i p_i}, \\\text{ subject to: } L_i w_i + \sum_{j=1}^N p_j X_{ji} &\leq C_i p_i + \sum_{j=1}^N X_{ij}p_i, \label{e:COM_industries}
\end{align}
where $w$ is the wage of labour, $p_j$ is the price of the $j$th industry's good, and $C_i$ is the consumption by households of good $i$ (\textbf{A4}). The budget constraint states that expenditure of industry $i$ cannot exceed its income from household consumption and its sales as intermediate input. We assume the representative household faces a similar constraint (\textbf{A3}), namely that it maximises a convex continuous utility function with consumption as its only argument subject to 
\begin{equation}
    \sum_{i=1}^n C_i p_i \leq \sum_{i=1}^n L_i w.\label{e:COM_households}
\end{equation}
This last condition assures that household expenditure on goods does not exceed household income from wages. The optimisation schedule for both industries and households makes eq. \eqref{e:COM_industries} and \eqref{e:COM_households} equalities. As industry level TFP $Z_i$ increases, maximisation of observed outputs $Y_i$ reduces industry price, which we refer to as $\pchange_i$, or in real terms as $\rpchange_i \equiv \pchange_i/w$. The rest of this section is concerned with the dependency of industry prices on input prices.

To discuss intermediate inputs, we label monetary flows from industry $i$ to $j$ as $m_{ij} \equiv {X}_{ji} p_j$. Normalising $m_{ij}$ by industry's $i$'s total expenditure gives its share of spending on input $j$: ${a}_{ij} \equiv {m_{ij}}/\left(\sum_{j=1}^N m_{ij} + L_iw\right)$. These shares are stored in the $N \times N$ adjacency matrix of the IO network, $\boldsymbol{A}$. Under assumptions \textbf{A1}-\textbf{A4}, it follows that $a_{ij} = \pdv{\ln(Y_i)}{\ln(X_{ji})}$ \citep{Mcnerney2021HowGrowth, Carvalho2019}. That is, $a_{ij}$ equals the elasticity
of output with respect to intermediate input $j$, a result referred to as allocative efficiency. Since $F_i$ has constant returns to scale, it can be shown that $\sum_{k \in {X_{1i}, ..., X_{1N}, L_i}}\pdv{\ln F_i}{\ln k} = 1$. Assuming $\pdv{\ln F_i}{\ln L_i} > 0$, it follows from \eqref{F_Elasticity_Result} that the sum of all intermediate input elasticities is less than one. As a result, the matrix $\boldsymbol{A}$ qualifies as substochastic, in that its rows elements -- equal to the constant intermediate input elasticities -- are positive and sum to less than 1 \citep{Carvalho2019}. This is useful because it guarantees the existence of $\boldsymbol{H} \equiv (\boldsymbol{I} - \boldsymbol{A})^{-1}$, where $\boldsymbol{I}$ is the identity matrix and we refer to $\boldsymbol{H}$ as the Leontief inverse. The following result is presented in \citep{Mcnerney2021HowGrowth}. It  makes the Leontief inverse central to the IO model. A proof is provided in Appendix \ref{Standard_IO_Appendix}.

\begin{theorem}
In the IO economy under assumptions \textbf{A1}-\textbf{A4}, the vector of real price changes is\begin{equation}
    \boldsymbol{\hat{r}} = - \boldsymbol{H} \boldsymbol{\gamma}
\end{equation}. \label{t:pricechanges}
\end{theorem}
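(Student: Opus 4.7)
The plan is to differentiate two identities---the industry budget constraint at equality and the production function itself---and subtract them so that all quantity growth rates cancel, leaving a linear recursion in price growth rates that inverts to the claimed expression.

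Revenue-maximisation makes the budget constraint \eqref{e:COM_industries} bind, so $p_i Y_i = \sum_j p_j X_{ji} + w L_i$. By the definition of $a_{ij}$ given just before $\boldsymbol{A}$ is introduced, the right-hand summands are the total spending $p_i Y_i$ times $a_{ij}$ and $1-\sum_j a_{ij}$ respectively. Total-differentiating in time and dividing by $p_i Y_i$ yields
\begin{equation*}
\hat{p}_i + \hat{Y}_i = \sum_j a_{ij}(\hat{p}_j + \hat{X}_{ji}) + \left(1-\sum_j a_{ij}\right)(\hat{w} + \hat{L}_i).
\end{equation*}
Independently, log-differentiating $Y_i = F_i(X_{1i},\ldots,L_i,Z_i)$ and invoking allocative efficiency together with the elasticity identity \eqref{F_Elasticity_Result} produces
\begin{equation*}
\hat{Y}_i = \sum_j a_{ij}\hat{X}_{ji} + \left(1-\sum_j a_{ij}\right)\hat{L}_i + \gamma_i,
\end{equation*}
where the final term captures $\partial \ln F_i/\partial \ln Z_i \cdot \hat{Z}_i$ under the Hicks-neutral reading of TFP. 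Subtracting this from the preceding display cancels every physical-quantity growth rate.

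What remains is $\hat{p}_i = \sum_j a_{ij}\hat{p}_j + (1-\sum_j a_{ij})\hat{w} - \gamma_i$. Subtracting $\hat{w}$ from both sides collapses the labour and wage terms into $\sum_j a_{ij}(\hat{p}_j - \hat{w})$ and turns the left-hand side into $\hat{r}_i$, giving $\hat{r}_i = \sum_j a_{ij}\hat{r}_j - \gamma_i$. Stacking over $i$ reads $(\boldsymbol{I}-\boldsymbol{A})\boldsymbol{\hat{r}} = -\boldsymbol{\gamma}$, and substochasticity of $\boldsymbol{A}$ (established just above the theorem) makes $\boldsymbol{H}=(\boldsymbol{I}-\boldsymbol{A})^{-1}$ well defined, yielding the conclusion.

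The principal obstacle is less the algebra than the justification that the coefficients $a_{ij}$ legitimately play both roles above: as expenditure shares in the differentiated budget constraint, and as output elasticities in the differentiated production function. That equivalence is exactly the content of the allocative-efficiency statement cited under \textbf{A1}--\textbf{A4}; any careful write-up must invoke it, together with spelling out the Hicks-neutral interpretation that makes $\hat{Y}_i$ pick up $\gamma_i$ with unit coefficient, rather than take either for granted.
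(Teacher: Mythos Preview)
Your proposal is correct and follows essentially the same route as the paper: log-differentiate the binding budget identity $p_iY_i=\sum_j p_jX_{ji}+wL_i$ to obtain $\hat p_i+\hat Y_i=\sum_j a_{ij}(\hat p_j+\hat X_{ji})+\tilde\ell_i(\hat w+\hat L_i)$, then use the Solow-residual identity $\gamma_i=\hat Y_i-\sum_j a_{ij}\hat X_{ji}-\tilde\ell_i\hat L_i$ (which the paper records separately as Lemma~\ref{L:prodimprov}) to eliminate the quantity growth rates, leaving $\hat r_i=\sum_j a_{ij}\hat r_j-\gamma_i$ and hence $\boldsymbol{\hat r}=-\boldsymbol{H}\boldsymbol{\gamma}$. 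The only organisational difference is that the paper isolates your second displayed equation as a standalone lemma rather than deriving it inline; your explicit flagging of the dual role of $a_{ij}$ and the unit TFP elasticity is a welcome clarification of steps the paper handles more tersely.
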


We can provide intuition to Theorem \ref{t:pricechanges} by noting that the coefficients $H_{ij}$ are the sum all possible weighted paths between $i$ and $j$, since $\boldsymbol{H} = (\boldsymbol{I} - \boldsymbol{A})^{-1} = \sum_{n-1}^{\infty} \boldsymbol{A}^n$. Thus, price declines are amplified as they propagate through the supply chain. We expect, therefore, that the more important an industry to the economy's supply chain, the larger the price changes induced by its productivity improvements. It follows that, all else equal, economies with longer supply chains grow faster. To assess this claim, we define the output multiplier $\boldsymbol{\OM} = \boldsymbol{H} \cdot \boldsymbol{1},$ with $\boldsymbol{1}$ a vector of 1's. By summing over rows of the Leontief inverse, a higher output-multiplier $\OM_i$ reflects a greater importance of industry $i$ in the economy's supply chain. \cite{Mcnerney2021HowGrowth} provide a discussion of the economic, probabilistic, and ecological intuitions behind the output-multiplier. 

Economic growth is here considered in terms of total final consumption, which denotes industry sales to households as $\fc \equiv \sum_{i=1}^n C_i p_i$. Denoting household spending shares $\tilde{c}_i \equiv C_i p_i / \fc$ allows us to define the nominal inflation index as $\pchange = \boldsymbol{\pchange} \cdot \tilde{c}$, and the weighted average output multiplier of the economy  $\bar{\OM} = \boldsymbol{\OM} \cdot \boldsymbol{\tilde{c}}$. By weighting for consumption shares, $\bar{\OM}$ captures how important supply chains are to household consumption. Note that $\fc$ is distinct from the total sales of industries,  $m_i \equiv \sum_{j=1}^n m_{ij} + C_i p_i$, which sum to total monetary output of the economy, $\mathcal{M} = \sum_{i=1}^n m_i.$ Both measures feed into a weighted average of the economy's TFP-growth rates as $\bar{\gamma} = \boldsymbol{\gamma}\cdot\boldsymbol{{m}}/\mathcal{M}$, which captures the importance of TFP improvements in the overall economy. We consider economic growth as the increase of household consumption in real terms, $\g \equiv \fc - \pchange$, which relates to the previous descriptive statistics in the next result.

\begin{theorem}
Assuming \textbf{A1}-\textbf{A4}, economic growth $\g$  is the product of average productivity improvements $\tilde{\gamma}$ weighted by industry sales and the average output multiplier $\bar{\mathcal{L}}$ weighted by household consumption shares, 
\begin{equation}
    \g = \tilde{\gamma} \bar{\mathcal{L}}.
\end{equation}
\label{t:OMg}
\end{theorem}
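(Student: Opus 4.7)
The plan is to combine Theorem~\ref{t:pricechanges} with a Leontief accounting identity that rewrites the consumption-weighted row sums of $H$ as Domar weights. First, from the definition of $\g$ and the household budget identity $\fc = Lw$ with $L$ fixed, I would obtain $\g = -\boldsymbol{\tilde c}\cdot\boldsymbol{\hat r}$, and then apply Theorem~\ref{t:pricechanges} to get
\begin{equation*}
\g \;=\; \boldsymbol{\tilde c}\cdot (H\boldsymbol{\gamma}) \;=\; \sum_{j} \gamma_j \Bigl(\sum_{i} \tilde c_i\, H_{ij}\Bigr).
\end{equation*}

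Next I would identify the inner bracket $\sum_{i} \tilde c_i H_{ij}$ as the Domar weight $\domar_j = m_j/\fc$. The key accounting identity is the monetary sales balance $m_i = \sum_j a_{ji} m_j + c_i$ (intermediate sales to other industries plus household consumption, with $c_i \equiv C_i p_i$), which rearranges to $\boldsymbol m = H^{\top}\boldsymbol c$. Extracting its $j$th coordinate and dividing by $\fc$ yields $\domar_j = \sum_i \tilde c_i H_{ij}$, so that $\g = \sum_j \domar_j \gamma_j$, recovering Hulten's Theorem in eq.~\eqref{Hultens_Result}. From here I would pull $1/\fc$ out of the Domar weights and insert $\mathcal{M}$:
\begin{equation*}
\g \;=\; \frac{1}{\fc}\sum_j \gamma_j m_j \;=\; \frac{\mathcal{M}}{\fc}\cdot \frac{\boldsymbol\gamma\cdot\boldsymbol m}{\mathcal{M}} \;=\; \frac{\mathcal{M}}{\fc}\,\tilde\gamma.
\end{equation*}
The proof then closes with the identity $\mathcal{M}/\fc = \bar{\OM}$, which follows by summing $\domar_j = \sum_i \tilde c_i H_{ij}$ over $j$: the left side collapses to $\mathcal{M}/\fc$, and the right rearranges to $\sum_i \tilde c_i \OM_i = \bar{\OM}$.

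The step I expect to be the main obstacle is the monetary accounting that produces $\boldsymbol m = H^{\top}\boldsymbol c$: one has to be careful with the convention that $a_{ij}$ is industry $i$'s share of spending on $j$, so it is $a_{ji}$ (not $a_{ij}$) that appears in the sales recursion, and the inversion therefore yields $H^{\top}$ rather than $H$. Everything after that is routine index manipulation.
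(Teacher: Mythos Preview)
Your proposal is correct and follows essentially the same route as the paper: the paper also derives the sales identity $\boldsymbol m = \boldsymbol H^{\top}\boldsymbol S$ from the industry budget constraint (your $\boldsymbol m = \boldsymbol H^{\top}\boldsymbol c$), dots with $\boldsymbol\gamma$ and divides by $\fc$ to reach $\tfrac{\mathcal M}{\fc}\tilde\gamma = \boldsymbol{\tilde c}^{\top}\boldsymbol H\boldsymbol\gamma$, proves $\bar{\OM}=\mathcal M/\fc$ as a separate lemma, and closes by identifying the right-hand side with $\g$ via Theorem~\ref{t:pricechanges} and $\g=-\hat r$. The only cosmetic difference is ordering: you start from $\g$ and work toward the product, while the paper starts from the accounting identity and identifies $\g$ last; the ingredients and the care about $a_{ji}$ versus $a_{ij}$ (hence $\boldsymbol H^{\top}$) are identical.
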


Theorem \ref{t:OMg} is discussed at greater length by \cite{Mcnerney2021HowGrowth}. There are several insights to note. First, Theorem \ref{t:OMg} captures the importance of consumer-oriented supply chains: $\bar{\OM}$ increases with the length of the supply chain reflected in the output multiplier, but scales by each industry's importance to final demand. All else equal, increasing importance of industry $i$ as an intermediate input to others should increase economic growth, however, less so for industries that represent a smaller share of final demand ($\tilde{c}_i$). Second, the model stresses productivity improvements as the driving force of economic growth. This is because $\bar{\OM}$ is empirically slow-moving, and does not fluctuate significantly over the 17-years considered by \cite{Mcnerney2021HowGrowth}. Productivity improvements, on the other hand, display greater year-to-year volatility. In our stylised model, productivity improvements are the short term driver of economic growth. Third, eq. \eqref{Hultens_Result} appears as a corollary. Denote the ratio of an industry's total sales to final consumption as its `Domar weight' $\domar_i \equiv m_i/\fc.$

\begin{corollary}
The response of economic growth to changes in the growth rate of TFP is given by
\begin{equation}
\pdv{\g}{\gamma_i} = \domar_i.
\end{equation}
\label{t:hultendynamic}
\end{corollary}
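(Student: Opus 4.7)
My plan is to differentiate the expression from Theorem \ref{t:OMg} directly, using an IO-accounting identity that ties the average output multiplier to the ratio of total monetary output to final consumption.

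Starting from $\g = \tilde{\gamma} \bar{\OM}$, I would expand $\tilde{\gamma} = \sum_j \gamma_j m_j/\mathcal{M}$ and treat the structural quantities $\bar{\OM}$, $m_j$ and $\mathcal{M}$ as independent of the instantaneous TFP growth rates $\gamma_i$---a first-order approximation in the spirit of Hulten's original result. Differentiation then gives
\begin{equation*}
\pdv{\g}{\gamma_i} = \bar{\OM} \, \pdv{\tilde{\gamma}}{\gamma_i} = \frac{m_i \bar{\OM}}{\mathcal{M}}.
\end{equation*}

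It remains to show $\bar{\OM} = \mathcal{M}/\fc$, after which substitution yields $\pdv{\g}{\gamma_i} = m_i/\fc = \domar_i$. This identity would come from the monetary IO balance: since $a_{ij} = m_{ij}/m_i$ is the share of $i$'s spending on input $j$, industry revenues satisfy $\boldsymbol{m} = \boldsymbol{A}^T \boldsymbol{m} + \boldsymbol{c}$ with $c_i = C_i p_i$, which inverts to $\boldsymbol{m} = \boldsymbol{H}^T \boldsymbol{c}$, so that $m_j = \sum_i H_{ij} C_i p_i$. Summing over $j$ then gives
\begin{equation*}
\mathcal{M} = \sum_j m_j = \sum_i C_i p_i \sum_j H_{ij} = \sum_i C_i p_i \OM_i = \bar{\OM} \fc,
\end{equation*}
once the row sum $\sum_j H_{ij}$ is recognised as $\OM_i$ via $\boldsymbol{\OM} = \boldsymbol{H} \boldsymbol{1}$.

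The only delicate point is bookkeeping---tracking the direction of the Leontief inversion so that the index over which $\mathcal{M}$ is summed aligns with the row sum defining $\OM_i$. Once this alignment is made, the identity $\mathcal{M} = \bar{\OM} \fc$ is immediate and the remainder is a one-line substitution. In fact this identity is essentially what is already packaged inside Theorem \ref{t:OMg}, so the corollary can equivalently be read off by rewriting $\g = (\bar{\OM}/\mathcal{M}) \sum_i \gamma_i m_i = \sum_i \gamma_i m_i/\fc$ and differentiating.
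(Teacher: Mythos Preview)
Your proposal is correct and follows essentially the same route as the paper: start from Theorem \ref{t:OMg}, expand $\tilde{\gamma}$, invoke the identity $\bar{\OM}=\mathcal{M}/\fc$ (which the paper isolates as Lemma \ref{l:OMavrg}), and differentiate. Your inline derivation of $\bar{\OM}=\mathcal{M}/\fc$ via $\boldsymbol{m}=\boldsymbol{H}^{T}\boldsymbol{c}$ and summing is in fact slightly cleaner than the paper's device of specialising to $\boldsymbol{\gamma}=\tilde{\gamma}\boldsymbol{1}$, but the content is identical.
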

See Appendix \ref{Standard_IO_Appendix} for a proof.

Hulten's Theorem originally expressed the aggregate TFP level, computed by the residual of total gross domestic product and economic inputs, as the sum of industry level TFP weighted by their Domar weights. Corollary \ref{t:breakhulten} emphasises this relationship by stating the effect of industry TFP shocks to be proportional to industry Domar weights. Whilst eq. \eqref{Hultens_Result} is a welcome simplification reducing productivity accounting to a weighted sum of industry statistics, it also rests on important assumptions. We have so far assumed independent productivity improvements across industries. \cite{Mcnerney2021HowGrowth} notes a correlation between an industry's TFP growth rates and its output multiplier. Interdependence between both variables threatens to cloud policy implications for Theorem \ref{t:OMg}, since plans to alter $\bar{\gamma}$ would have to carefully consider any subsequent effects on $\bar{\OM}$. In the rest of this paper, we review the assumption of TFP independence with respect to IO linkages, and explore how this changes the first order effect of industry-level TFP shocks. 

\section{Interdependent TFP-growth on the industry network}\label{S:interdepTFP}
This section presents a model of interdependent TFP-growth and derives the equilibrium in terms of growth rates. By TFP-growth, we mean the time derivative of $Z_i$. We assume a Cobb-Douglas knowledge production function $F_i(X_{1i},...,X_{Ni},L_i,Z_i)$ (\textbf{A5}),
\begin{equation}
\dot{Z}_i  \equiv  ({E_i}e^{\lambda_i t})^{\alpha}\exoindus_i  \prod_{j=1}^N Z_{j} ^{\beta a_{ij}}, \label{d:tfp}
\end{equation}
where $a_{ij}$ is the coefficient of the IO network's adjacency matrix $\boldsymbol{A}$, and $\alpha$, $\beta$ are elasticities in $(0,1)$. TFP-growth in eq. \eqref{d:tfp} depends explicitly on three terms. First, each industry has a unique and innate TFP growth factor $\exoindus_i$, which captures differential rates of technological progress in specific sectors \citep{FinkReeves2019}, and is assumed constant with respect to time as an initial simplification. Second, the term $E_{i}$ is a measure of the sector's initial endowments that might be associated to initial capital ($K_{i,0}$) or R\&D ($R_{i,0}$) as in the endogenous growth literature \citep{Romer1994, ErturKoch2007, pichler2018technological}. This grows at an exponential rate $\lambda_i$, and captures `learning by doing' \citep{Arrow1962TheDoing}. Third, there are TFP spillovers over the IO network, such that the elasticity of $Z_i$ with respect to  $Z_j$ is $\beta a_{ij}$ for some constant $\beta<1$ in $\rp$. We expect this to be the case because for countries at the technological frontier, innovations arise within a specific sector before adoption in adjacent sectors. We thus expect TFP to increase in the sector of innovation first, before diffusing to other sectors as a function of inter-sectoral interaction. For developing countries below the frontier, we expect a similar process because implementing foreign technologies requires adaptation to a novel environment which leading sectors succeed in first, before a similar diffusive process to neighbouring industries. The use of IO flows as a measure of industry interactions is intuitive, but also finds corroboration in preceding TFP-spillovers on the World Input-Output Database \citep{Serrano2017}, 
within US industries \citep{Raa2006TheInefficiency, Wolff2011} and in EU industries \citep{ProdGrowthEurope}.

We next show that the steady state of TFP growth rates ($\boldsymbol{\lambda}$) determined by eq. \eqref{d:tfp} is the globally stable solution $\boldsymbol{\gamma}_0$ to a system of generalised Lotka-Volterra equations.  These systems take the form $\boldsymbol{\dot{g}} = \boldsymbol{D(g)}\left(\boldsymbol{\rho} + \boldsymbol{\tilde{A}}\boldsymbol{g}\right)$, for $\boldsymbol{g}$ a vector of populations, $\boldsymbol{D}(\boldsymbol{g})$ the diagonal matrix with $\boldsymbol{g}$ on the diagonal, $\boldsymbol{\rho}$ the reproduction rate of the populations, and $\boldsymbol{\tilde{A}}$ the interaction matrix of the Lotka-Volterra system. These systems are commonly used to describe predator-prey and competitive dynamics. Given that the interaction matrix is invertible, the steady state of such systems (when $\boldsymbol{\dot{g}}=\boldsymbol{0}$) is given by $\boldsymbol{g}=-\boldsymbol{\tilde{A}}^{-1} \boldsymbol{\rho}$ \citep{hofbauer1998evolutionary}. 

We  first note that $\dot{\gamma_i}  = {\ddot{Z_i}}/{{Z}_i} - \gamma_i $. Then, taking the logarithm of eq. \eqref{d:tfp},
and differentiating with respect to time,
 \begin{align}
\frac{\ddot{Z_i} }{\dot{Z_i} } &=  \alpha \lambda_i + \sum_{j=1}^N \beta a_{ij} \hat{Z_j }.
 \end{align}
Multiplying both sides by ${\gamma}_i ={\dot{Z}_i }/{{Z}_i }$ and subtracting ${\gamma}_i ^2$,
\begin{align}
\frac{\dot{Z}_i }{{Z}_i } \frac{\ddot{Z_i} }{\dot{Z_i} }  - {\gamma}_i ^2 &= \frac{\dot{Z}_i }{{Z}_i }\left(\alpha \lambda_i + \sum_{j=1}^N \beta \hat{Z_j } a_{ij}\right) - {\gamma}_i ^2,
\end{align}
which rearranges to
\begin{align}
 \dot{\gamma_i}  &= {\gamma_i}  \left(\alpha \lambda_i + \sum_{j=1}^N \beta \gamma_j a_{ij}\right) - {\gamma}_i ^2.
 \end{align}
When writing this system in matrix form,
 \begin{align}
 \dot{\boldsymbol{\gamma}} =& \textbf{D}\left(\boldsymbol{\gamma} \right)\left(\alpha \boldsymbol{\lambda} - \left(\boldsymbol{I} - \beta \textbf{A} \right)\boldsymbol{\gamma} \right), \label{e:gammaLV}
 \end{align}
 where $\textbf{D}\left(\boldsymbol{\gamma} \right)$ is the diagonal matrix with diagonal entries given by $\boldsymbol{\gamma} $, we find a competitive Lotka-Volterra system  \citep{hofbauer1998evolutionary,baigent2016lotka} with interaction matrix $-\left(\boldsymbol{I} - \beta \textbf{A} \right)$. Because $\boldsymbol{A}$ is substochastic, its largest eigenvalue is less than 1 \citep{Carvalho2019}. Exploiting the fact that $\beta < 1$ and applying Lemma \ref{L:spectralraiduslinear} (Appendix \ref{Appendix_LV_Invertibility}), the largest eigenvalue of $- \left(\boldsymbol{I} - \beta \textbf{A} \right)$ must thus be strictly negative. This is useful because Theorem 2 in \cite{Plemmons1977M-matrixM-matrices} then guarantees that $- \left(\boldsymbol{I} - \beta \textbf{A} \right)$ is an invertible M-matrix. The steady state value is thus
\begin{align}\label{LV_ss}
    \boldsymbol{\gamma}_{0} = \alpha  \left(\boldsymbol{I} - \beta \textbf{A} \right)^{-1} \boldsymbol{\lambda} = \alpha \boldsymbol{H}_{\beta} \boldsymbol{\lambda},
\end{align}
which is globally stable, e.g. any vector of initially positive growth rates $\boldsymbol{\gamma}$ converges to $\boldsymbol{\gamma}_0$ (this arises as any M-matrix is Lyapunov diagonally stable \citep{sun2023gallery}; see \cite{hofbauer1998evolutionary} for further details or Appendix \ref{Appendix_LV_Invertibility} for an alternative proof). To understand some of the implications of this steady state, we can write the component form as 
\begin{equation}
\gamma_{0,i} = \alpha \lambda_i/(1-\beta a_{ii}) + \beta \sum_{i \neq j}^N \gamma_{0,j} a_{ij}/(1-\beta a_{ii}),
\end{equation}
where the first term is a network-independent term as studied in the endogenous growth literature \citep{pichler2018technological}. The second term is the weighted interaction from all neighbouring growth rates. Note that these interdependencies may result in lower growth rates for industries with relatively higher growth rate parameters
TFP growth given by eq. \eqref{d:tfp} determines a steady state system that reflects both industry-level advantages and the role that the wider economic environment plays in further productivity gains. 

We now turn to relating this form for $\gamma_0$ to the IO model presented in Section \ref{S:StandardIO} and the correlation empirically observed between output multipliers and productivity growth rates \citep{Mcnerney2021HowGrowth, Shuoshuo}.

\begin{prop}
For the IO model described in Section \ref{S:StandardIO} with assumptions \textbf{A1}-\textbf{A4} and the Cobb-Douglas model of TFP (\textbf{A5}), we expect the output multiplier to be positively correlated with TFP growth rates when 
\begin{align}
\bar{\lambda} > \frac{\left(1 - \beta\right)}{\alpha} \bar{\gamma},
\end{align}
where  the arithmetic means of TFP growth rates are given by $\bar{\gamma}=\sum_i \gamma_i/n$ and returns to endowments are given by $\bar{\lambda} =\sum_i \lambda_i/n$.
\label{p:corrOMgamma}
\end{prop}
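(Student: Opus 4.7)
The plan is to reduce positive correlation to positive sample covariance, then evaluate that covariance via a mean-field approximation of the fixed-point equations characterising $\boldsymbol{\OM}$ and $\boldsymbol{\gamma}_0$. Since the Pearson correlation has the same sign as the sample covariance, it suffices to identify when $\tfrac{1}{n}\sum_i \OM_i \gamma_{0,i} - \bar{\OM}\bar{\gamma}$ is positive.

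First, I would combine the Leontief identity $(\boldsymbol{I} - \boldsymbol{A})\boldsymbol{\OM} = \boldsymbol{1}$ with the steady state \eqref{LV_ss} rewritten as $(\boldsymbol{I} - \beta\boldsymbol{A})\boldsymbol{\gamma}_0 = \alpha\boldsymbol{\lambda}$ to obtain the componentwise recursions $\OM_i = 1 + \sum_j a_{ij}\OM_j$ and $\gamma_{0,i} = \alpha\lambda_i + \beta\sum_j a_{ij}\gamma_{0,j}$. These are the algebraic objects from which the covariance will be built.

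Next, I would apply a mean-field approximation, replacing the neighbour-weighted averages by overall means: $\sum_j a_{ij}\gamma_{0,j} \approx R_i \bar{\gamma}$ and $\sum_j a_{ij}\OM_j \approx R_i \bar{\OM}$, with $R_i = \sum_j a_{ij}$ the $i$-th row sum. This yields the leading-order forms $\gamma_{0,i} \approx \alpha\lambda_i + \beta\bar{\gamma}R_i$ and $\OM_i \approx 1 + \bar{\OM}R_i$, and taking population averages gives the self-consistent relations $\bar{\OM} \approx 1/(1-\bar{R})$ and $\bar{\gamma} \approx \alpha\bar{\lambda}/(1-\beta\bar{R})$. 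Supposing the innate productivity $\lambda_i$ is uncorrelated with $R_i$ across industries, an elementary computation collapses the sample covariance to $\beta\bar{\gamma}\bar{\OM}\,\mathrm{var}(R)$, whose sign is that of $\bar{\gamma}\bar{\OM}$.

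Finally, I would observe that the stated inequality is algebraically equivalent to the bound $\bar{R}<1$: substituting the self-consistent $\bar{\gamma} \approx \alpha\bar{\lambda}/(1-\beta\bar{R})$ into $\alpha\bar{\lambda} > (1-\beta)\bar{\gamma}$ reduces it to $(1-\beta)/(1-\beta\bar{R}) < 1$, i.e.\ $\bar{R}<1$. Under this condition both $\bar{\OM}$ and $\bar{\gamma}$ are positive, so the approximate covariance is strictly positive whenever the row sums are not all identical. This is the regime of well-posed substochasticity, consistent with the assumption on $\boldsymbol{A}$ made earlier.

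The main obstacle is justifying the mean-field substitution, since the exact covariance depends on the joint distribution of $(\lambda_i,R_i,\gamma_{0,i})$ and on higher-order correlations between positions in the supply chain that the marginal means discard. The proposition's hedged phrasing ``we expect'' reflects this: a rigorous version would require either bounding the approximation error for a specified class of adjacency matrices, or appealing to an ensemble interpretation (e.g.\ random $\boldsymbol{A}$ drawn so that $\lambda$ is exchangeable with respect to the network) in which the substitution becomes exact in the limit.
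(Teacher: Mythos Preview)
Your argument and the paper's both rely on a heuristic decorrelation assumption, but the routes diverge substantially. The paper does \emph{not} work with the one-step matrix $\boldsymbol{A}$ and row sums $R_i$. Instead it uses the algebraic identity $(\boldsymbol{I}-\beta\boldsymbol{A}) = \beta(\boldsymbol{I}-\boldsymbol{A}) + (1-\beta)\boldsymbol{I}$ to rewrite the steady state as $\beta\boldsymbol{\gamma}_0 + (1-\beta)\boldsymbol{H}\boldsymbol{\gamma}_0 = \alpha\boldsymbol{H}\boldsymbol{\lambda}$, so that the Leontief inverse $\boldsymbol{H}$ (and hence $\OM_i = \sum_j H_{ij}$) appears directly. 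Writing $\gamma_{0,j} = \bar{\gamma} + \Delta\gamma_j$, $\lambda_j = \bar{\lambda} + \Delta\lambda_j$ and assuming $\sum_j \mathbb{E}[\Delta\gamma_j H_{ij}\mid\OM_i] = \sum_j \mathbb{E}[\Delta\lambda_j H_{ij}\mid\OM_i] = 0$, the paper obtains $\mathbb{E}[\gamma_{0,i}\mid\OM_i] = \bigl(\alpha\bar{\lambda} - (1-\beta)\bar{\gamma}\bigr)\OM_i/\beta$. The inequality in the proposition is then \emph{exactly} the sign of this slope.

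Your mean-field on $\boldsymbol{A}$ is a legitimate alternative heuristic, but it buys you less. By replacing neighbour-weighted sums with $R_i$ times the global mean and then feeding the self-consistent relation $\bar{\gamma}\approx\alpha\bar{\lambda}/(1-\beta\bar{R})$ back into the inequality, you reduce the proposition's condition to $\bar{R}<1$, which is guaranteed by substochasticity. That establishes the one-way implication in the proposition, but only vacuously: in your approximation the correlation is positive \emph{regardless} of whether the stated inequality holds, so the inequality never functions as a threshold. The paper's decomposition keeps $\bar{\gamma}$ and $\bar{\lambda}$ as free aggregates and isolates $\alpha\bar{\lambda}-(1-\beta)\bar{\gamma}$ as the slope itself, which is what the proposition is meant to exhibit. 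Your honest caveat about the mean-field substitution applies equally to the paper's conditional-independence assumption; the difference is that their decomposition lands on the inequality directly rather than through a back-substitution that collapses it.
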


By introducing the specific form of interdependent TFP growth in eq. \eqref{d:tfp}, we obtain a natural link between steady state TFP growth rates and output multipliers. In the following, we consider industry $\gamma_i$ and $\lambda_i$ to follow a uniform distribution, an argument taken from \cite{Mcnerney2021HowGrowth}. We write TFP growth rates
\begin{equation}
    \gamma_i = \bar{\gamma} + \Delta\gamma_i, \label{e:gammadecomposed}
\end{equation}
the sum of the average TFP growth rate across industries $\bar{\gamma}$ and an industry deviation $\Delta \gamma_i \in [-D,D]$ for some fixed $D>0$. The correlation between the elements of the Leontief inverse $H_{ij}$ with $\gamma_i$ are low, so we assume $\sum_{j=1}^n \mathbb{E} \left[ \Delta\gamma_j H_{ij} | \OM_i \right] = 0$. Similarly, we start by assuming the growth rate of initial endowments ($\lambda_i$) to be independent of the Leontief inverse, so that $\lambda_i = \bar{\lambda} + \Delta \lambda_i$ and $\sum_{j=1}^n \mathbb{E} \left[ \Delta \lambda_j H_{ij} | \OM_i \right] = 0$.
Rearranging the steady state given by eq. \eqref{LV_ss} we find, 
\begin{align}
    \boldsymbol{H}_{\beta}^{-1} \boldsymbol{\gamma}_0 &= \alpha \boldsymbol{\lambda}.
\end{align}
Expanding $\boldsymbol{H}_{\beta} = \left(\boldsymbol{I} - \beta \boldsymbol{A}\right)^{-1}$,
\begin{align}
     \left(\boldsymbol{I} - \beta \boldsymbol{A}\right) \boldsymbol{\gamma}_0 &= \alpha \boldsymbol{\lambda},
\end{align}
gives
\begin{align}
    \beta \underbrace{\left(\boldsymbol{I} - \boldsymbol{A}\right)}_{\boldsymbol{H}^{-1}} \boldsymbol{\gamma}_0 + \left(1-\beta\right)\boldsymbol{\gamma}_0 &= \alpha \boldsymbol{\lambda}, \\
    \implies \beta \boldsymbol{\gamma}_0 + \left(1-\beta\right) \boldsymbol{H} \boldsymbol{\gamma}_0 &= \alpha \boldsymbol{H} \boldsymbol{\lambda}.
\end{align}
In component form, this is
\begin{align}
    \beta \gamma_{0,i} + \left(1-\beta\right) \sum_{j=1}^N H_{ij} \gamma_{0,j} = \alpha \sum_{j=1}^N H_{ij} \lambda_{j}.
\end{align}
We can substitute in the decompositions of $\lambda_{j}$ and $\gamma_{0,j}$ as in eq. \eqref{e:gammadecomposed} to find,
\begin{align}
    \beta \gamma_{0,i} + \left(1-\beta\right) \sum_{j=1}^N H_{ij} \left( \bar{\gamma} + \Delta\gamma_{0,j}  \right) = \alpha \sum_{j=1}^N H_{ij} \left( \bar{\lambda} + \Delta \lambda_{0,j} \right),
\end{align}
and take the expectation of this expression conditioned on the output multiplier,
\begin{align}
\beta \expeccond{\gamma_{0,i}}{\OM_i} + \left(1 - \beta \right) \sum_{j=1}^N H_{ij} \bar{\gamma}
= \alpha \sum_{j=1}^N  H_{ij} \bar{\lambda},
\end{align}
which can be rewritten,
\begin{align}
\beta \expeccond{\gamma_{0,i}}{\OM_i} + \left(1 - \beta\right) \OM_i \bar{\gamma} = \alpha \OM_i \bar{\lambda},
\end{align}
and rearranged to
\begin{align}
\expeccond{\gamma_{0,i}}{\OM_i}= \left( \alpha \bar{\lambda} +  \left(\beta - 1\right) \bar{\gamma} \right) \OM_i / \beta. \label{e:gammaandOM}
\end{align}
From eq. \eqref{e:gammaandOM}, $\expeccond{\gamma_{0,i}}{\OM_i}$ increases with $\OM_i$ if $$ \bar{\lambda} > \frac{\left(1 - \beta\right)}{\alpha} \bar{\gamma},$$ which is the inequality \ref{p:corrOMgamma}. When the elasticity with respect to inputs is sufficiently large ($\beta \rightarrow 1$), then the term on the right hand side of this inequality shrinks to 0. Our model thus suggests that productivity improvements are expected to be a positive function of the output multiplier, given sufficiently high industrial interdependency. 


\section{Policy design in the first order}\label{S:HultenExperiment}

Whilst eq. \eqref{Hultens_Result} is useful in an accounting framework, policy-makers affect inputs to TFP-growth rather than TFP itself. Here we explore how economic growth ($\g$) responds to changes in the growth rate of endowments ($\lambda_i$), i.e.
\begin{equation}
    \pdv{\g}{\lambda_i}.
\end{equation}
Using the model developed in Section \ref{S:interdepTFP}, we  show how the relation in eq. \eqref{Hultens_Result} can be derived as a special case of this computation.

\begin{prop} 
For the IO model described in Section \ref{S:StandardIO}, the effect of a change in $\lambda_j$ in the $j$-th industry on economic growth $\g$ is
\begin{align}
    \pdv{\g}{\lambda_j} &= \alpha \sum_{k=1}^N \underbrace{H_{\beta,ik} \theta_k  }_{(1)}+  \underbrace{\gamma_{0,k} \pdv{\theta_k}{\lambda_j}}_{(2)},
    \label{e:breakhulten}
\end{align} 
where $\domar_k$ is the Domar weight of industry $k$.\label{t:breakhulten}
\end{prop}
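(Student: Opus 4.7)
The plan is to start from the aggregated-growth decomposition $\g=\sum_{k=1}^{N}\theta_{k}\gamma_{k}$, which is the integrated form of Corollary~\ref{t:hultendynamic}, and differentiate both sides with respect to $\lambda_{j}$ evaluated at the Lotka--Volterra steady state. Crucially, both the Domar weights $\theta_{k}$ and the growth rates $\gamma_{k}=\gamma_{0,k}$ must be treated as functions of the endowment-growth vector $\boldsymbol{\lambda}$: perturbing $\lambda_{j}$ changes the steady-state TFP growth rates through eq.~\eqref{LV_ss}, and through the optimisation programme in Section~\ref{S:StandardIO} it also shifts the equilibrium prices and sales that define $\theta_{k}=m_{k}/\fc$.

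Applying the product rule yields
$$
\pdv{\g}{\lambda_{j}}
=\sum_{k=1}^{N}\pdv{\gamma_{0,k}}{\lambda_{j}}\,\theta_{k}
+\sum_{k=1}^{N}\gamma_{0,k}\,\pdv{\theta_{k}}{\lambda_{j}}.
$$
For the first sum I would differentiate the closed form $\boldsymbol{\gamma}_{0}=\alpha\boldsymbol{H}_{\beta}\boldsymbol{\lambda}$ obtained in eq.~\eqref{LV_ss}; since $\boldsymbol{H}_{\beta}=(\boldsymbol{I}-\beta\boldsymbol{A})^{-1}$ depends on the IO coefficients $a_{ij}$ but not on $\boldsymbol{\lambda}$, this gives $\partial\gamma_{0,k}/\partial\lambda_{j}=\alpha H_{\beta,kj}$ by linearity. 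Substituting back reproduces term $(1)$ of the proposition, $\alpha\sum_{k}H_{\beta,kj}\theta_{k}$, which encodes propagation of an endowment shock to every industry through the modified Leontief inverse weighted by their Domar shares. Term $(2)$ is left implicit, since making $\partial\theta_{k}/\partial\lambda_{j}$ explicit would require committing to a specific production function and is deferred to Section~\ref{Cobb-Douglas}.

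The main obstacle is justifying that the identity $\g=\sum_{k}\theta_{k}\gamma_{k}$ may legitimately be differentiated in this manner. Corollary~\ref{t:hultendynamic} supplies the partial derivative with $\theta_{k}$ held fixed, so my use here amounts to a first-order expansion around the steady state, and this should be flagged so that the second term is read as the correction Hulten's classical accounting suppresses. Two sanity checks would anchor the computation. First, setting $\beta=0$ shuts down TFP spillovers so that $\boldsymbol{H}_{\beta}=\boldsymbol{I}$, $\gamma_{0,j}=\alpha\lambda_{j}$, and term $(1)$ collapses to $\alpha\theta_{j}$, which is precisely Hulten's eq.~\eqref{Hultens_Result} composed with the one-to-one map from $\lambda_{j}$ to $\gamma_{j}$ induced by eq.~\eqref{d:tfp}. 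Second, dropping term $(2)$ altogether recovers the pure network amplification $\alpha(\boldsymbol{H}_{\beta}^{\top}\boldsymbol{\theta})_{j}$, making transparent that our departure from Hulten in the first order arises from both the $\boldsymbol{H}_{\beta}$-weighting of Domar weights and the endogenous response of $\boldsymbol{\theta}$ itself.
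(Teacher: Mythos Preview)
Your proposal is correct and follows essentially the same route as the paper: differentiate the Domar-weighted identity $\g=\sum_k\theta_k\gamma_k$ at the steady state, apply the product rule, and use $\partial\gamma_{0,k}/\partial\lambda_j=\alpha H_{\beta,kj}$ from the linearity of eq.~\eqref{LV_ss}. The paper's proof arrives at the same identity by writing $\g=\tilde{\gamma}\bar{\mathcal{L}}$ (Theorem~\ref{t:OMg}), expanding $\tilde{\gamma}$, and then invoking Lemma~\ref{l:OMavrg} ($\bar{\mathcal{L}}=\mathcal{M}/\fc$) to turn $m_k\bar{\mathcal{L}}/\mathcal{M}$ into $m_k/\fc=\theta_k$; this is just a slightly more explicit path to the starting point you take for granted.

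One small correction to your framing: your worry that Corollary~\ref{t:hultendynamic} only ``supplies the partial derivative with $\theta_k$ held fixed'' gets the logic backwards. The relation $\g=\sum_k\theta_k\gamma_k$ is an exact accounting identity (visible in the proof of Corollary~\ref{t:hultendynamic} as $\g=\boldsymbol{\gamma}\cdot\boldsymbol{m}/\fc$), valid for any configuration of the economy, so differentiating it with the full product rule needs no further justification. Corollary~\ref{t:hultendynamic} is the \emph{consequence} of that identity under the standard model's assumption that $\theta_k$ does not move with $\gamma_i$, not the other way round.
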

Proposition \ref{t:breakhulten} breaks down the effect of an increase in $\lambda_j$ on $\g$ into (1), a sum of Domar weight across \textit{all} industries \textit{weighted} by a network statistic, and (2), the product of the growth rate of TFP at the steady state and the change in the Domar weights of all industries. In contrast to eq. \eqref{Hultens_Result}, Proposition \ref{t:breakhulten} captures the effect of changes in productivity growth due to the IO structure of the economy. Changes in the inputs to TFP-growth depend on \textit{all} industries and the structure of the economy through the distribution of the Domar weights, even to a first order approximation.

We can relate Proposition \ref{t:breakhulten} to Hulten's result in Corollary \ref{t:hultendynamic} given three conditions. First, that there are no productivity spillovers between industries, so the elasticity of TFP growth in the $i$-th industry with respect to that in the $j$-th is $\beta = 0$. Then, the steady state is $\gamma_0 = \alpha \boldsymbol{\lambda}$, and $\boldsymbol{H}_{\beta} = \boldsymbol{I}$. Second, that productivity $Z_i$ grows exponentially in time, and moves 1-for-1 to changes in factors of productivity growth $E_i$ so $\alpha = 1$ and $\boldsymbol{\lambda} = \boldsymbol{1}$. Third, that Domar weights are fixed, so $\pdv{\domar_k}{\lambda_j}=0$. Under these conditions, eq. \eqref{e:breakhulten} is
\begin{align}
    \pdv{\g}{\lambda_j} &=  \sum_{k=1}^N H_{\beta,jk} \theta_k = \theta_j
    \label{THELASTLASTONE},
\end{align} 
which is the same relationship between productivity changes and $\g$ in Hulten's result. The advantage is that we have considered how policy affects $\gamma_i$ through $\lambda_i$, rather than compute $\pdv{\g}{\gamma_i}$ directly.

\section{Cobb-Douglas and the importance of backward-linkages}\label{Cobb-Douglas}
To show network effects neglected by previous literature, the following section evaluates Proposition \ref{t:breakhulten} on a network of industries that complies with Hulten's Theorem in \citep{Baqaee2019}. We assume industries produce following a Cobb-Douglas production function (\textbf{A6}),

\begin{equation}
F_i(\boldsymbol{X}_i, L_i, Z_i) = Z_i \eta_i L_i^{\alpha_{iL}} \displaystyle\prod_{j=1}^N X_{ji}^{\alpha_{ij}}, \label{CobbDouglas}
\end{equation}
where the variable $\eta_i$ is a normalisation constant\footnote{The constant helps derive steady state values for inputs to the production function, as in \citep{Carvalho2019}, and is set as $\eta_i = (\alpha_{iL}^{\alpha_{iL}} \prod_{j=1}^N \alpha_{ij}^{\alpha_{ij}})^{-1}$.}, and the terms $\alpha_{ij}$ and $\alpha_{iL}$ are defined in $\rp$ such that  $\alpha_{iL} + \sum_{j=1}^N\alpha_{ij} = 1$. The profit maximisation schedule of each industries implies allocative efficiency. The adjacency matrix of this model is then
\begin{equation}
    \boldsymbol{A} = \left( \begin{matrix}
        \alpha_{11} & ... & \alpha_{1n} \\
        \vdots & \ddots & \vdots \\
        \alpha_{n1}& ... & \alpha_{nn} \\
    \end{matrix} \right),
\end{equation}
and we also have the following result. 
\begin{prop}
For the IO model described in Section \ref{S:StandardIO} assuming \textbf{A1}-\textbf{A6}, the response of economic growth to policy changes is given by
\begin{equation}
\pdv{\g}{\lambda_j} = \alpha \sum_{k=1}^N H_{\beta,jk} \theta_k.
\label{e:Cobbdougals_result}
\end{equation}
\label{p:CobbDouglas}
\end{prop}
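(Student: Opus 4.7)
The plan is to specialize Proposition~\ref{t:breakhulten} to the Cobb-Douglas case and show that the second term — the one containing $\gamma_{0,k}\,\partial\theta_k/\partial\lambda_j$ — vanishes identically. That term is the ``non-Hulten'' correction capturing how policy reshuffles the sales distribution; under Cobb-Douglas the whole matrix $\boldsymbol{A}$ and the final-demand shares are pinned down by exogenous preference and technology parameters, so Domar weights are invariant to productivity shocks.

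First I would invoke Proposition~\ref{t:breakhulten} to write
\begin{equation}
\pdv{\g}{\lambda_j} = \alpha \sum_{k=1}^N H_{\beta,jk}\,\theta_k \;+\; \sum_{k=1}^N \gamma_{0,k}\,\pdv{\theta_k}{\lambda_j},
\end{equation}
so the task reduces to proving $\partial\theta_k/\partial\lambda_j = 0$ for every $k$. Under \textbf{A6}, allocative efficiency combined with the Cobb-Douglas form gives $a_{ij}=\partial \ln F_i/\partial \ln X_{ji} = \alpha_{ij}$, a constant independent of prices, TFP, and therefore of $\lambda_j$. Hence $\boldsymbol{A}$, and consequently $\boldsymbol{H}=(\boldsymbol{I}-\boldsymbol{A})^{-1}$, are parameters of the model rather than endogenous objects.

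Next I would use market clearing in value terms. Letting $\boldsymbol{m}$ be the vector of nominal sales and $\boldsymbol{c}$ the vector of nominal household consumption, the accounting identity $m_i = C_i p_i + \sum_j a_{ji} m_j$ rearranges to $\boldsymbol{m}^T=\boldsymbol{c}^T \boldsymbol{H}$, so dividing by $\fc$ yields
\begin{equation}
\theta_i = \sum_{j=1}^N \tilde{c}_j\,H_{ji}.
\end{equation}
Since $\boldsymbol{H}$ is fixed by the previous step and the Cobb-Douglas final expenditure shares $\tilde{c}_j$ are likewise constant (a standard consequence of the usual Cobb-Douglas household problem consistent with \textbf{A3}), each $\theta_i$ is a fixed parameter and $\partial\theta_k/\partial\lambda_j=0$, which leaves exactly eq.~\eqref{e:Cobbdougals_result}.

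The main obstacle is the step justifying constancy of $\tilde{c}_j$: assumption \textbf{A3} specifies only a convex continuous utility, not Cobb-Douglas preferences. I would therefore either explicitly augment \textbf{A6} with Cobb-Douglas preferences over final goods (the convention in \cite{Baqaee2019}, whose framework the section is anchored to), or note that invariance of $\tilde{c}_j$ to $\lambda_j$ is required for the Baqaee-Farhi version of Hulten's Theorem this section is meant to recover, and use that as the operative assumption. Everything else is bookkeeping on the identity $\boldsymbol{m}^T = \boldsymbol{c}^T \boldsymbol{H}$ and the constant-share property of Cobb-Douglas technology.
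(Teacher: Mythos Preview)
Your argument is correct and follows the same architecture as the paper: start from Proposition~\ref{t:breakhulten} and show that the second summand $\sum_k \gamma_{0,k}\,\partial\theta_k/\partial\lambda_j$ vanishes because Domar weights are invariant to productivity under Cobb-Douglas. The execution differs slightly. The paper uses the chain rule
\[
\pdv{\theta_k}{\lambda_j}=\sum_{q=1}^N \pdv{\theta_k}{\ln Z_q}\,\pdv{\ln Z_q}{\lambda_j}
\]
and then invokes Proposition~7 of \cite{Baqaee2019} (one-factor CES/Cobb-Douglas) to set $\partial\theta_k/\partial\ln Z_q=0$ for all $k,q$. You instead give the elementary content of that result directly: $a_{ij}=\alpha_{ij}$ are fixed exponents, so $\boldsymbol{H}$ is fixed, and together with constant expenditure shares $\tilde c_j$ the identity $\theta_i=\sum_j \tilde c_j H_{ji}$ pins down $\boldsymbol{\theta}$ as a parameter. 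Your route is more self-contained; the paper's route is shorter because it outsources the invariance step to Baqaee--Farhi.

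Your caveat about preferences is well placed and not a defect of your proof relative to the paper's: the Baqaee--Farhi result the paper cites carries exactly the same implicit requirement on the demand side, and indeed the paper adopts logarithmic (Cobb-Douglas) household preferences immediately after stating the proposition when it works out $\boldsymbol{\theta}=\boldsymbol{H}\boldsymbol{\kappa}$. So treating constant $\tilde c_j$ as part of the Cobb-Douglas specification is consistent with how the section is set up.
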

The result is Proposition \ref{t:breakhulten} where the second term (2) is null, as shown in \cite{Baqaee2019}'s study of changes in Domar weights. This is made explicit when writing
\begin{equation}
 \pdv{\domar_k}{\lambda_j} = \sum_{q=1}^N\pdv{\domar_k}{\ln Z_q} \pdv{\ln Z_q}{\lambda_j}, \label{e:whatisdomarlambda}
\end{equation}
where by Proposition 7 in \cite{Baqaee2019} for the one factor general constant elasticity of substitution (CES) production function, $\pdv{\domar_k}{\ln Z_q} = 0$ for all $k, q = 1,...,N$. It follows that eq. \eqref{e:whatisdomarlambda} is a sum of 0 terms, and term (2) in eq. \eqref{e:breakhulten} cancels out to yield Proposition \ref{p:CobbDouglas}. 

We use the Cobb-Douglas case to clarify two claims made by our model and their difference to previous work. 

First, we presented an intuitive generalisation of Hulten's result. Considering the effect of policy changes (changes in $\lambda_i$) on economic growth ($\g$) makes this model a thought experiment for policy discussion, rather than an accounting tool. In doing so, Proposition \ref{p:CobbDouglas} like Proposition \ref{t:breakhulten} makes the case that neighbouring Domar weights, as well as the structure of the network capture by $\boldsymbol{H}_\beta$, determine changes to the rate of economic growth. Increasing the number of paths between industry $j$ and its buyers increases the off-diagonal elements of $\boldsymbol{H}_\beta$, and thus amplifies the effect of increasing $\lambda_j$. Assuming that percentage changes in $\lambda_i$ move 1-to-1 for percentage changes in $Z_i$($\alpha=1$), we can immediately  relate it to Hulten's result, e.g. $\partial \g / \partial \lambda_j = \domar_j$. 

Second, interdependent productivity growth revises Hulten's result more radically than previous work, which has so far focused on second order effects. The path-breaking work of \cite{Baqaee2019} discusses the response of final demand to second-order changes in industry-level productivity, and, like Proposition \ref{t:breakhulten}, focuses on changes in Domar weights. There are three important differences. Firstly, their model is static and considers final demand rather than economic growth. Secondly, our model argues for differences in the first order, a more direct conflict with Hulten's result. Finally, \cite{Baqaee2019} find that when the production function is Cobb-Douglas, the second order effects are null, and Hulten's Theorem holds. In contrast, Proposition \ref{p:CobbDouglas} stresses the importance of network structure to economic growth and productivity even in a highly simplified (Cobb-Douglas), first-order setting. Unlike results in \cite{Baqaee2019}, neither Domar weights nor the structure of the economy need change for the economic structure to amplify or reduce returns to industrial investment. Whilst the model in \cite{Baqaee2019} virtuously captures industrial dynamics such as Baumol's cost disease or the effect of oil price shocks, we focus on an important channel by which industrial diversification and interdependence between industries, such as in manufacturing, may accelerate growth by \textit{amplifying} industry TFP improvements \citep{Hausmann2014TheProsperity, mcnerney2021bridging}.

\begin{figure}
        \centering
        \includegraphics[width=0.7\textwidth]{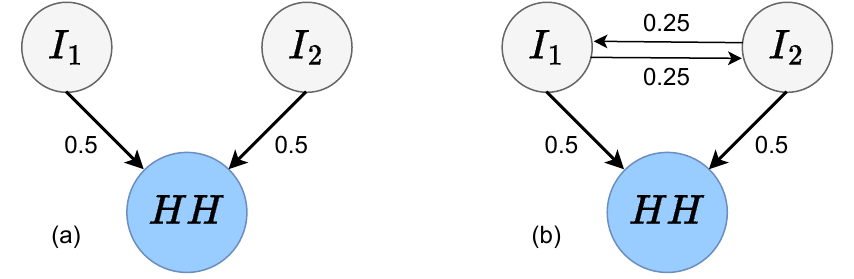}
          \caption{Hulten's result as eq. \eqref{Hultens_Result} models the effect of a productivity increase across two graphs as identical, where our model stresses the losses from of productivity gains in the interdependent economy.
          Panels (a) and (b) show two economies with industries that have identical Domar weights, given a Cobb-Douglas production function and logarithmic preferences for the representative household.
          Hulten's Theorem implies that productivity shocks are equal across both economies. The Domar Weights are equal because the row sums of the Leontief inverses are identical:
          \begin{minipage}{\linewidth}
          $$
            \boldsymbol{A}_a = \left( \begin{matrix}
            0.5 & 0 \\
            0 & 0.5 \end{matrix} \right) ,\,\,   
            \boldsymbol{A}_b = \left( \begin{matrix}
            0.25 & 0.25 \\
            0.25 & 0.25 \end{matrix} \right) 
            \implies 
            \boldsymbol{H}_a = \left( \begin{matrix}
            2 & 0 \\
            0 & 2 \end{matrix} \right), \,\, 
            \boldsymbol{H}_b =  \left( \begin{matrix}
            1.5 & 0.5 \\
            0.5 & 1.5 \end{matrix} \right),  $$
            \end{minipage}
            such that, taking preferences $\boldsymbol{\pref}=(0.5,0.5)^T$ to be symmetric across goods from industry $1$ and $2$, the Domar weights for each network are $\boldsymbol{\domar} = (1,1)^T$. However, looking at our model's gains from returns to endowments to TFP growth rates $ \lambda_i$ for $i=1,2$ we find that $\partial \g /\partial\lambda_i = 2$ in network (a), and $\partial \g/\partial \lambda_i = 1.25$ in network (b). The difference arises because productivity improvements are propagated up and down supply chain linkages.}\label{fig1}
\end{figure}

It is tempting to consider Proposition \ref{t:breakhulten} as a simple upward revision of the effect of TFP improvements. However, an insightful implication is the role that backward linkages may play in the response of economic growth. We illustrate this by considering a specific form for final demand, following \cite{Carvalho2019}. First, we note that eq. \eqref{e:Cobbdougals_result} under matrix form is:
\begin{equation}
    \nabla_{\boldsymbol{\lambda}} \g = \alpha \boldsymbol{H_{\beta}}\boldsymbol{\domar}, \label{e:CobbDouglas_result_mat}
\end{equation}
where $\nabla_{\boldsymbol{\lambda}}$ is the gradient with respect to the vector of rates of exponential growth $\boldsymbol{\lambda}$.
Now let the representative household supply one unit of labour (L=1), so $\fc = \sum_i L_i w = w$. The household consumes from $n$ goods by maximising logarithmic preferences,
\begin{equation}
    u(C_1,...,C_n) = \sum_{i=1}^N \pref_i \log (C_i/ \pref_i),
\end{equation}
such that $\sum_{i=1}^n \pref_i = 1$, and subject to the budget constraint,
\begin{equation}
	w \geq \sum_{i=1}^N C_i p_i.
\end{equation}
At optimum, $C_i = \pref_i w / p_i$ \citep{Carvalho2019}. The optimisation schedule for industries yields $X_{ij} = \frac{\alpha_{ij} Y_i p_i}{p_j}$. Assuming all goods markets clear, so that,
\begin{equation}
Y_i = C_i + \sum_{j=1}^N X_{ij},
\end{equation}
we can substitute in the optimum values, 
\begin{equation}
Y_i = \pref_i w / p_i + \sum_{j=1}^N \alpha_{ji} p_j Y_j / p_i.
\end{equation}
Multiplying by $p_i$ and dividing through by $\fc = w$ yields
\begin{equation}
Y_i p_i / \fc = \pref_i + \sum_{j=1} \alpha_{ij} p_j Y_j / \fc,
\end{equation}
where we can relabel $Y_i p_i / \fc = \domar_i$. Writing this under matrix form and rearranging, we find,
\begin{equation}
\boldsymbol{\domar} = \left( \boldsymbol{I} - \boldsymbol{A} \right) ^{-1} \boldsymbol{\pref} = \boldsymbol{H} \boldsymbol{\pref}. \label{e:domar_CB}
\end{equation}
Substituting this value into eq. \eqref{e:CobbDouglas_result_mat} we have,
\begin{equation}
     \nabla_{\boldsymbol{\lambda}} \g = \alpha \boldsymbol{H_{\beta}} \boldsymbol{H} \boldsymbol{\pref}.
\end{equation}
To make sense of this result, consider the case where $\beta \rightarrow 1$, capturing a high influence of neighbouring TFP growth rates. Then 
\begin{equation}
     \nabla_{\boldsymbol{\lambda}} \g \rightarrow \alpha \boldsymbol{H}^{2} \boldsymbol{\pref},
\end{equation}
that is, the response of economic growth to an improvement depends on the squared Leontief inverse. Component-wise, this is
\begin{equation}
    \pdv{\g}{\lambda_j} = \alpha \sum_{k=1}^N H_{jk}H_{kj} \pref_j.
\end{equation}
Recall that the coefficient $H_{jk}$ of $\boldsymbol{H} = \left(\boldsymbol{I} - \boldsymbol{A}\right)^{-1} = \boldsymbol{A} + \boldsymbol{A}^2 + \boldsymbol{A}^3 + ...$ is a sum of all possible weighted paths from the vertex (industry) $i$ to $j$. When $\beta \rightarrow 0$, the effect of a change in the endogenous growth rate $\lambda_j$ is given by the $j$-th coefficient of the Domar weights in eq. \eqref{e:domar_CB}. This depends solely on the importance of industry $j$ as a supplier through $H_{j1},...,H_{jN}$. When $\beta \rightarrow 1$, Proposition \ref{t:breakhulten} instead stresses that the response to changes in $\lambda_j$ should consider industry $j$'s role as a buyer of intermediate inputs -- its backward linkages -- through $H_{1j},...,H_{Nj}$. Figure \ref{fig1} shows how two networks with identical Domar weights across all industries can thus differ in returns to changes in the parameters $\lambda_i$ of TFP growth. The insights here are that new backward and forward linkages amplify the returns to productivity enhancements, and that existing linkages stresses the importance of all industries in contributing towards shared economic growth.

\section{Discussion}\label{S:Discussion}
This paper contributes to a wide literature assessing the impact of industrial interdependencies on economic growth. The standard IO model provides a clear framework in which to define measures of the importance of an industry to the economy's supply chain, such as the output multiplier. The work of \cite{Mcnerney2021HowGrowth} directly relates economic growth to the weighted average of industry output multipliers and weighted average productivity growth. Their work is succinct, tractable, and relates to previous literature through Hulten's result. However, \cite{Mcnerney2021HowGrowth} choose  to ignore a positive correlation between output multiplier and TFP growth rates. We see two issues. First, the finding occludes policy targeting TFP if the policy risks also affecting industry centralities. Second, it contradicts Hulten's result's assumption of the network structure being irrelevant to TFP-growth. To address these issues, we explored the question of TFP spillovers over IO linkages. In developed economies, productivity spillovers along IO linkages have been important to explaining the gains of the Computer revolution \citep{Raa2006TheInefficiency, Wolff2000EnginesRaa}. In emerging economies, ideas with empirical support like the `principal of relatedness' \citep{mcnerney2021bridging} suggest productivity gains should diffuse across exchanges. There thus appears a need to reconcile IO models with interdependent TFP growth.

To address these challenges, this paper defined interdependent TFP growth in terms of stocks drawing on the form of a Cobb-Douglas knowledge production function \citep{ErturKoch2007, pichler2018technological}. Deriving growth rates for TFP growth as a Lotka-Volterra system produced a steady state value which was embedded into the IO model. Within this framework, the correlation between productivity growth and output multiplier is expected to be a positive function of interdependencies, as captured by increasing elasticity of industry TFP growth regarding neighbouring growth levels. Our model, however, may come at the expense of Hulten's result. Rather than reducing a policy intervention to a change in industry-level TFP aggregated by Hulten's Theorem, our model considers how a policy's effects are transferred through TFP and the entire IO network. Like \cite{Baqaee2019}, a deviation from Hulten's result in eq. \eqref{Hultens_Result} came from changing Domar weights. The difference in our model is that this holds even in a first order approximation with  perfect competition. We showed how to recover Hulten's result as a special case of complete industry TFP independence. Finally, considering a specific homothetic, convex utility function for final demand provides our model with insight on the qualitative differences to Hulten's result. Specifically, we find that productivity shocks may differ in consequences for industries with identical Domar weights because productivity shocks are propagated upstream, to the suppliers of the shocked industry. We thus reinforce the importance of the entire economic environment to industry level TFP.

There are important omissions in our work. We do not seek to develop real-world frictions. This is to provide a clear channel through which TFP conditioned on existing IO linkages determines the rate of economic growth. Within our model of TFP growth, however, our assumption of constant industry-specific growth rates is precarious. Most industries display changing returns to R\&D throughout their development cycle \citep{Fink2019HowInnovation, Haraguchi2019StructuralEmployment}. Asking how industry-level productivity growth rates naturally changes over time  might provide further insight into the dynamics of structural change, as emerging economies shift through industries \citep{Lin2013LessonsTransformation, mcnerney2021bridging}. In providing intuition of our result that contrasts Hulten's result \citep{Hulten1978GrowthInputs}, we limit ourselves to exploring the case of Cobb-Douglas production functions to abstract from changes in Domar weights. However, non-linear responses of Domar weights are discussed in depth by \cite{Baqaee2019} within a static model, and their integration is an exciting avenue for future research. Our conclusion that the Domar weight of industries is a poor aid to development policy seeking to optimise investments for productivity improvements is not novel. However, our model provides a potential framework to discuss the importance of backwards and forward linkages, whilst remaining compatible with previous IO designs. The takeaway that backward as well as forward linkages may be crucial to optimising the effect of increasing TFP growth rates stresses a need to develop a well-balanced domestic economy for long-term prosperity, an insight that may help explain the struggle of emerging markets specialised in highly independent export sectors, and the success of manufacturing-led, interdependent, export economies. 

\appendix
\section{Appendix}

We first list together the assumptions of the IO, TFP and joint models in Appendix \ref{Assumptins_Appendix}. Then, we give proofs based on the three sections of this paper: those derived from the model in \citep{Mcnerney2021HowGrowth} in \ref{Standard_IO_Appendix}, those derived for the model of interdependent TFP growth rates in \ref{TFP_Appendix}, and finally discuss the invertability of $H_{\beta}$ in \ref{Appendix_LV_Invertibility}, exploiting results on Lotka-Volterra systems.

\subsection{Assumptions}
\label{Assumptins_Appendix}
For clarity, we restate the assumptions made to construct the IO model.

\begin{align}
&\textbf{(A1)} \ 
&\pdv{F_i}{k} > 0, \text{ }\pdvtwice{F_i}{k} &< 0, \text{ }\forall {k} \in \{\boldsymbol{X}_i L_i\} \tag{diminishing marginal returns}\\
&\textbf{(A2)} \ 
&F_i(bX_{i1},...,b L_i,Z_i) &= b F_i(X_{i1},...,L_i,Z_i)\tag{constant returns to scale} \\
&\textbf{(A3)} \ 
& \underbrace{S_i + \sum_{j=1}^n m_{ji}}_{\text{flows into industry $i$}} &= \underbrace{L_i w + \sum_{j=1}^n m_{ij}}_{\text{flows out industry $i$}} \tag{perfect competition}\\
&\textbf{(A4)} \ 
& \underbrace{\sum_{i=1}^n L_i w}_{\text{flows into households}} &= \underbrace{\sum_{i=1}^n C_i p_i}_{\text{flows out}} \tag{market clearing} \\
&\textbf{(A5)} \ 
&\dot{Z}_i  &\equiv  ({E_i}e^{\lambda_i t})^{\alpha}\exoindus_i  \prod_{j=1}^N Z_{j} ^{\beta a_{ij}} \tag{Cobb-Douglas TFP}\\
&\textbf{(A6)} \ 
&F_i(\boldsymbol{X}_i, L_i, Z_i) &= Z_i \eta_i L_i^{\alpha_{iL}} \displaystyle\prod_{j=1}^N X_{ji}^{\alpha_{ij}} \tag{Cobb-Douglas production}
\end{align}

\subsection{The standard IO model}\label{Standard_IO_Appendix}
The following Lemma is taken from \cite{Mcnerney2021HowGrowth}:
\begin{lemma}
For the IO model of Section \ref{S:StandardIO} with assumptions \textbf{A1}-\textbf{A4}, TFP growth rates are $$\gamma_i = \hat{Y}_i - \displaystyle\sum_{j=1}^N a_{ji} \hat{X}_{ij} - \tilde{\ell}_{i} \hat{L}_i.$$ \label{L:prodimprov}
\end{lemma}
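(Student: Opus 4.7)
The statement is a standard Solow-residual growth-accounting identity applied to industry $i$, and the plan is to log-differentiate the production function and then substitute the cost-share identifications already collected in Section \ref{S:StandardIO}.

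First, I would work with the physical output $F_i(\boldsymbol{X}_i, L_i, Z_i)$ (identifying it with $Y_i$ in real terms, which is the convention appropriate for a growth-accounting identity) and take the total time derivative of $\ln F_i$. This produces
\[
\hat{Y}_i \;=\; \sum_{j=1}^N \frac{\partial \ln F_i}{\partial \ln X_{ji}} \, \hat{X}_{ji} \;+\; \frac{\partial \ln F_i}{\partial \ln L_i} \, \hat{L}_i \;+\; \frac{\partial \ln F_i}{\partial \ln Z_i} \, \hat{Z}_i.
\]

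Second, I would substitute the allocative-efficiency identifications already invoked in the main text. Under assumptions \textbf{(A1)}--\textbf{(A4)}, the industry's value-maximisation first-order conditions imply $\partial \ln F_i / \partial \ln X_{ji} = a_{ij}$ (the expenditure share on input $j$) and $\partial \ln F_i / \partial \ln L_i = \tilde{\ell}_i$ (the labour cost share). Consistent with \textbf{(A2)} and the text's description of $Z_i$ as a TFP scalar that multiplies physical inputs, $Z_i$ enters Hicks-neutrally with $\partial \ln F_i / \partial \ln Z_i = 1$, so that $\hat{Z}_i = \gamma_i$ enters the decomposition directly rather than scaled by an elasticity.

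Third, rearranging isolates $\gamma_i$, giving
\[
\gamma_i \;=\; \hat{Y}_i \;-\; \sum_{j=1}^N a_{ij} \, \hat{X}_{ji} \;-\; \tilde{\ell}_i \hat{L}_i,
\]
which matches the statement of the lemma up to relabelling of the summation index. The only real obstacle is bookkeeping: one must keep straight that $Y_i$ in the lemma refers to physical output (the $F_i$ of the production function) rather than the nominal revenue $F_i p_i$ defined earlier as a value function. Either a consistent deflator is applied throughout, or one checks that any price-growth term cancels by market clearing together with the cost-share structure (the elasticities summing to one, by \textbf{(A2)} and eq. \eqref{F_Elasticity_Result}). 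Beyond this indexing/deflator care, the proof is a mechanical consequence of Euler's theorem and the first-order conditions already embedded in the standard IO model of Section \ref{S:StandardIO}.
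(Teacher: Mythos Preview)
Your proposal is correct and follows essentially the same approach as the paper: log-differentiate the production function in time, identify the input elasticities with the expenditure shares $a_{ij}$ and $\tilde{\ell}_i$ via allocative efficiency, treat $Z_i$ as Hicks-neutral so that its elasticity is unity, and rearrange. The paper's proof is identical in structure, and your caveat about $Y_i$ denoting physical output rather than nominal revenue is well placed given the paper's own slightly ambiguous notation.
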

\begin{proof} 
The result follows from taking the time derivative of the logarithm of the production function and the definition of elasticity. Given \textbf{A1}-\textbf{A4}, allocative efficiency holds as discussed in Section \ref{S:StandardIO}. Taking the total time-derivative of the logarithm of $\indprod$ and applying the chain rule,
\begin{align}
\dv{\ln \indprod_i}{t} 
&= 
    \displaystyle\sum_{j=1}^N 
    \pdv{\ln(\indprod_i)}{\ln(X_{ji})}
    \pdv{\ln(X_{ji})}{t} +
    \pdv{\ln(\indprod_i)}{\ln(L_{i})}
    \pdv{\ln(L_{i})}{t} +
    \pdv{\ln \indprod_i}{\ln(Z_i)}\pdv{\ln(Z_i)}{t}.
\label{e:lnfdt}
\end{align}
We compute the last term on the right as, 
\begin{align}
\pdv{\ln(\indprod_i)}{\ln(Z_i)}\pdv{\ln(Z_i)}{t} &= \pdv{\indprod_i}{Z_i} \frac{Z_i}{\indprod_i} \frac{1}{Z_i} \dot{Z}_i \\
&= \frac{G_i Z_i}{\indprod_i}\hat{Z}_i = \hat{Z}_i = \gamma_i.
\end{align}
Then eq. \eqref{e:lnfdt} is
\begin{align} \frac{\dot{\indprod_i}}{\indprod_i} &= 
    \displaystyle\sum_{j=1}^N 
    \frac{\partial \ln(\indprod_i)}{\partial \ln(X_{ji})}
    \frac{\dot{X_{ji}}}{X_{ji}} +
    \frac{\partial \ln(\indprod_i)}{\partial \ln(L_{i})}
    \frac{\dot{L_{i}}}{L_i} +
    \gamma_i.
\end{align}
Using
\begin{align}
\pdv{\ln(\indprod_i)}{\ln(X_{ji})} =  \text{Elasticity}\left(\indprod_i, X_{ji}\right),
\end{align}
and by assumption of allocative efficiency, Elasticity$(\indprod_i, X_{ji}) = a_{ij}$. Hence
\begin{flalign}
 \frac{\dot{\indprod_i}}{\indprod_i} &=
    \displaystyle\sum_{j=1}^N 
    \s
    \frac{\dot{X_{ji}}}{X_{ji}} +
    \ls
    \frac{\dot{L_{i}}}{L_i} +
    \gamma_i,
 \end{flalign}
which rearranging and denoting growth rates with hats gives
\begin{equation*}
\gamma_i = \hat{\indprod}_i - \displaystyle\sum_{j=1}^N a_{ji} \hat{X}_{ij} - \tilde{\ell}_{i} \hat{L}_i. \qedhere
\end{equation*}
\end{proof}

\begin{proof}[Proof of Theorem \ref{t:pricechanges}]
We start from \textbf{A3}, a conservation of mass argument for industries,
\begin{align}
    \sum_{j=1}^N m_{ij} + L_i w = \sum_{j=1}^N X_{ji} p_j + L_i w
\end{align}
Now consider the total time derivative of the logarithm of this expression:
\begin{align}
    \dv{\ln(\indprod_i p_i)}{t} =& \dv{ \ln(\sum_{j=1}^N X_{ji} p_{j} + L_i w)}{t}.
\end{align}
Accounting for price changes, we can apply the chain rule,
\begin{multline}
\hat{p_i} + \hat{\indprod_i} =
    \displaystyle\sum_{j=1}^N
        \frac{\partial \ln(\sum_{j=1}^N X_{ji} p_{j} + L_i w)}{\partial \ln(X_{ji})}\frac{\partial \ln(X_{ji})}{\partial t} +
        \frac{\partial \ln(\sum_{j=1}^N X_{ji} p_{j} + L_i w)}{\partial \ln(L_i)}\frac{\partial \ln(L_i)}{\partial t} \\
         +
        \frac{\partial \ln(\sum_{j=1}^N X_{ji} p_{j} + L_i w)}{\partial \ln(p_j)}\frac{\partial \ln(p_j)}{\partial t} 
         +
        \frac{\partial \ln(\sum_{j=1}^N X_{ji} p_{j} + L_i w)}{\partial \ln(w)}\frac{\partial \ln(w)}{\partial t}.
\end{multline}
We compute the partial derivatives directly,
\begin{multline}
\hat{p_i} + \hat{\indprod_i} = \displaystyle\sum_{j=1}^N
    \frac{X_{ji}p_j}{\sum_{j=1}^N X_{ji} p_{j} + L_i w} \hat{X_{ji}} +
    \frac{X_{ji}p_j}{\sum_{j=1}^N X_{ji} p_{j} + L_i w} \hat{p_{j}} \\
    + \frac{L_{i}w}{\sum_{j=1}^N X_{ji} p_{j} + L_i w} \hat{L_{i}} +
    \frac{L_{i}w}{\sum_{j=1}^N X_{ji} p_{j} + L_i w} \hat{w}.
\end{multline}
Again using the conservation of mass argument $\sum_{j=1}^N X_{ji} p_{j} + L_i w = m_i$, and $X_{ji} p_j = m_{ij}$,
\begin{align}
    \hat{p_i} + \hat{\indprod_i} &= 
    \displaystyle\sum_{j=1}^N
    \frac{m_{ij}}{m_i}(\hat{X_{ji}} + \hat{p_{j}}) +
    \frac{L_{i}w}{m_i}(\hat{L_{i}} + \hat{w}),
\end{align}
where $\frac{m_{ij}}{m_i}$ is the coefficient $a_{ij}$ of the IO network's adjacency matrix, and $\frac{L_i w}{m_i}$ is $\ls_i$. Hence this simplifies to
\begin{align}
    \hat{p_i} + \hat{\indprod_i} = \sum_{j=1}^N
    \s(\hat{X_{ji}} + \hat{p_{j}}) +
    {\ls}(\hat{L_{i}} + \hat{w}),
\end{align}
which, once rearranged, is:
\begin{align}
    \underbrace{\hat{\hat{p}_i} - \hat{w}}_{= r_i} &= -\underbrace{\left( \hat{\indprod}_i - \sum_{j=1}^N \s \hat{X_{ji}} - \hat{L}_i \ls \right)}_{=\gamma_i \text{from Theorem \ref{t:pricechanges}}} + \sum_{j=1}^N\underbrace{(\hat{p}_j - \hat{w})}_{=r_j}\s.
\end{align}
Taking
\begin{align}
    \hat{r}_i &= -\gamma_i + \displaystyle\sum_{j=1}^N \hat{r}_j a_{ij},
\end{align}
and rearranging to write in matrix form gives 
\begin{equation}
\boldsymbol{\hat{r}} = - (\boldsymbol{I} - \boldsymbol{A})^{-1} \boldsymbol{\gamma} = - \boldsymbol{H} \boldsymbol{\gamma} \qedhere \label{price level}
\end{equation}
\end{proof}

\begin{proof}[Proof of Corollary 1]
The following explicates the proof in the appendix of \cite{Mcnerney2021HowGrowth}. We use Theorem \ref{t:OMg} to express $\g$ as
\begin{align}
    \g = \tilde{\gamma} \bar{\mathcal{L}},
\end{align}
where we expand in the definition of $\tilde{\gamma} \equiv \boldsymbol{\gamma} \cdot \boldsymbol{m}/\mathcal{M}$,
\begin{align}
    \g = \boldsymbol{\gamma} \cdot \frac{\boldsymbol{m}}{\mathcal{M}} \bar{\mathcal{L}},
\end{align}
to apply Lemma \ref{l:OMavrg} (Appendix \ref{Standard_IO_Appendix}) for $\bar{\mathcal{L}} = \frac{\mathcal{M}}{{S}}$, so
\begin{align}
    \g &= \boldsymbol{\gamma} \cdot \frac{\boldsymbol{m}}{\mathcal{M}} \frac{\mathcal{M}}{{S}} = \boldsymbol{\gamma} \cdot \boldsymbol{m}\frac{1}{\fc}.
\end{align}
where Hulten's result is obtained by deriving with respect to $\gamma_i$, 
\begin{align}
\pdv{\g}{\gamma_i} & = m_i/S = \domar_i. \qedhere
\end{align}
\end{proof}

\begin{lemma}
For the IO model in Section \ref{S:StandardIO} with assumptions \textbf{A1}-\textbf{A4}, economic growth is $\g \equiv \hat{\boldsymbol{C}} \cdot \boldsymbol{\tilde{c}} - \hat{L}$. \label{l:g=CC-r}
\end{lemma}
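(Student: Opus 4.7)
The plan is to derive the identity directly from the definition of $\fc = \sum_{i=1}^N C_i p_i$ by taking its logarithmic time derivative and matching the resulting inflation-like term against $\pchange$ in the definition of $\g$. This is an accounting identity that does not invoke the input-output structure (no Leontief inverse, no allocative efficiency, no specific form of $F_i$); only the definitions and assumptions \textbf{A3}--\textbf{A4} should be relevant.

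First, I would apply the chain rule to $\ln \fc$. Differentiating $\fc = \sum_i C_i p_i$ in $t$ gives $\hat{\fc} = \sum_i (C_i p_i/\fc)(\hat{C}_i + \hat{p}_i)$, which, recognising the spending share $\tilde{c}_i = C_i p_i / \fc$, collapses to $\hat{\fc} = \boldsymbol{\hat{C}} \cdot \boldsymbol{\tilde{c}} + \boldsymbol{\hat{p}} \cdot \boldsymbol{\tilde{c}} = \boldsymbol{\hat{C}} \cdot \boldsymbol{\tilde{c}} + \pchange$, using the definition of the nominal inflation index. Substituting into the growth definition $\g \equiv \hat{\fc} - \pchange - \hat{L}$ (the per-capita real growth reading of the body-text $\g \equiv \fc - \pchange$), the $\pchange$ terms cancel and yield $\g = \boldsymbol{\hat{C}} \cdot \boldsymbol{\tilde{c}} - \hat{L}$, which is the claim. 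If one instead takes the fixed-stock interpretation of $L$, then $\hat{L}=0$ and the $-\hat{L}$ summand is trivially absent.

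The main obstacle is notational rather than mathematical: the body text writes $\g \equiv \fc - \pchange$, abusively using the stock symbol $\fc$ in place of its growth rate $\hat{\fc}$, while the lemma explicitly carries the labour-growth correction $-\hat{L}$. Once these conventions are pinned down consistently (one can additionally invoke the household budget identity \textbf{A4} in the form $\fc = Lw$ to obtain $\hat{\fc} = \hat{L} + \hat{w}$, which allows rewriting $\g$ in wage-deflated form), the proof reduces to a single application of the chain rule followed by one cancellation of terms; no deeper ingredient is needed.
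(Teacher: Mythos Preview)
Your proposal is correct and follows essentially the same approach as the paper: both differentiate the household expenditure identity $\fc=\sum_i C_i p_i$, recognise the spending shares $\tilde{c}_i$, and cancel the inflation term $\boldsymbol{\hat p}\cdot\boldsymbol{\tilde c}=\pchange$ against the definition of $\g$. The only cosmetic difference is where the $-\hat L$ correction enters: you place it in the definition of $\g$ as a per-capita real growth rate, whereas the paper obtains it by differentiating the per-worker normalised expression $\sum_i C_i p_i/(Lw)$ (treating $w$ as numeraire) before subtracting $\pchange$.
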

\begin{proof}[Proof]
We show the derivation of economic growth, as $\g = \hat{\fc} - \boldsymbol{\hat{p}} \cdot \boldsymbol{\tilde{c}}$, which is the growth of total household spending minus the average growth rate of prices weighted by spending shares. We take the time-differentail of  eq. \eqref{e:COM_households} and apply the chain rule, 
\begin{align}
\dot{\fc} = \dv{}{t} \sum_{i=1}^N \frac{C_i p_i}{Lw}= \sum_{i=1}^N \frac{\dot{C}_i p_i}{Lw} - \frac{C_i p_i \dot{L}}{L^2} + \frac{\dot{p}_i C_i}{Lw}, 
\end{align}
and introduce $\frac{C_i}{C_i}$ and $\frac{p_i}{p_i}$,
\begin{align}
\dot{\fc} = \sum_{i=1}^N \hat{C_i} \frac{C_i p_i}{Lw} - \frac{C_i p_i}{Lw} \hat{L} + \hat{p_i}\frac{p_i C_i}{Lw} = \hat{\boldsymbol{C}} \cdot \boldsymbol{\tilde{c}} - \hat{L} + \boldsymbol{\tilde{c}} \cdot \boldsymbol{\hat{p}}.
\end{align}
Then subtracting  $\boldsymbol{\tilde{c}} \cdot \boldsymbol{\hat{p}}$ from both sides we have,
\begin{align}
   \dot{\fc} - \boldsymbol{\tilde{c}} \cdot \boldsymbol{\hat{p}} = \hat{\boldsymbol{C}} \cdot \boldsymbol{\tilde{c}} - \hat{L} = \g.
\end{align}
\end{proof}

\begin{theorem}
For the IO model in Section \ref{S:StandardIO} with assumptions \textbf{A1}-\textbf{A4}, economic growth is $\g = -\hat{r}$. \label{t:growthasdeflation}
\end{theorem}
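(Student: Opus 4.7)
The plan is to interpret $\hat{r}$ in the statement as the consumption-share weighted average of the real price change vector, i.e.\ $\hat{r} \equiv \boldsymbol{\hat{r}} \cdot \boldsymbol{\tilde{c}}$, in analogy with the nominal inflation index $\hat{p} = \boldsymbol{\hat{p}} \cdot \boldsymbol{\tilde{c}}$ defined in Section \ref{S:StandardIO}. With this reading, the statement says that real economic growth equals the negative of the consumption-weighted real price change, which is the natural scalar counterpart of Theorem \ref{t:pricechanges}.

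First, I would invoke the chain of identities already established in Lemma \ref{l:g=CC-r}, which yields the equivalent form $\hat{g} = \hat{\fc} - \boldsymbol{\hat{p}} \cdot \boldsymbol{\tilde{c}}$. Next I would use the household budget constraint \eqref{e:COM_households}, which under \textbf{A4} holds with equality, so that $\fc = \sum_i C_i p_i = \sum_i L_i w = L w$ where $L = \sum_i L_i$ is the fixed total labour supply stated in Section \ref{S:StandardIO}. Taking logarithmic time derivatives of $\fc = Lw$ and using $\hat{L} = 0$ gives $\hat{\fc} = \hat{w}$.

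Substituting this into the expression for $\hat{g}$ yields
\begin{align}
    \hat{g} = \hat{w} - \boldsymbol{\hat{p}} \cdot \boldsymbol{\tilde{c}} = -\bigl(\boldsymbol{\hat{p}} - \hat{w}\,\boldsymbol{1}\bigr) \cdot \boldsymbol{\tilde{c}},
\end{align}
where in the last step I use $\boldsymbol{1} \cdot \boldsymbol{\tilde{c}} = \sum_i \tilde{c}_i = 1$ from the definition of consumption shares. Recognising $\hat{r}_i = \hat{p}_i - \hat{w}$ as the component of real price change, the bracketed vector is exactly $\boldsymbol{\hat{r}}$, so $\hat{g} = -\boldsymbol{\hat{r}} \cdot \boldsymbol{\tilde{c}} = -\hat{r}$.

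The computation is short; the only real obstacle is interpretive, namely pinning down that the scalar $\hat{r}$ in the statement is to be read as the consumption-weighted aggregate of the vector $\boldsymbol{\hat{r}}$, and that the theorem implicitly relies on $L$ being constant in time so that nominal final consumption $\fc$ grows at the wage rate. Once those points are made explicit, the result drops out by combining Lemma \ref{l:g=CC-r} with the market-clearing identity \textbf{A4}, with no appeal to the Leontief inverse required at this stage.
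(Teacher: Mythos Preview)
Your argument is correct and follows essentially the same route as the paper: both proofs combine the household budget identity $\fc = Lw$ with Lemma~\ref{l:g=CC-r} and the definition $\hat{r}_i = \hat{p}_i - \hat{w}$ to obtain $\g = -\hat{r}$. The only substantive difference is in bookkeeping. You invoke the definitional form $\g = \hat{\fc} - \boldsymbol{\hat{p}}\cdot\boldsymbol{\tilde{c}}$ and then set $\hat{L}=0$ (justified by the fixed labour stock) so that $\hat{\fc}=\hat{w}$; the paper instead differentiates $w = \fc/L$ directly, carries the $\hat{L}$ term throughout, and closes the argument by appealing to the stated form of Lemma~\ref{l:g=CC-r}, $\g = \hat{\boldsymbol{C}}\cdot\boldsymbol{\tilde{c}} - \hat{L}$. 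Your version is shorter; the paper's version has the mild advantage that it does not explicitly require $\hat{L}=0$, since the $\hat{L}$ contribution cancels against the one already present in Lemma~\ref{l:g=CC-r}. Given that the paper fixes $L$ anyway, this distinction is cosmetic.
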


\begin{proof}
    The proof starts from the conservation of mass argument for the household vertex, eq. \ref{e:COM_households},
\begin{align}
\underbrace{Lw}_{\text{household income}} & = \underbrace{\textbf{C} \cdot {\textbf{p}}}_{\text{household consumption}} = \fc,
\end{align}
 where dividing through by worker-headcount L, and deriving with respect to time this is
\begin{align}
\dot{w} &= \dot{\textbf{C}} \cdot \textbf{p}/L - {\textbf{C}} \cdot \frac{\dot{L}}{L^2} \textbf{p} + \textbf{C} \cdot \dot{\textbf{p}}/L.
\end{align}
This can be expanded as: 
\begin{align}
\dot{w} &= 
\displaystyle\sum_{i=1}^{N} 
    \frac{\dot{C} _i p_i}{L} 
    - \frac{C_i \dot{L} p_i}{L^2}
    + \frac{\dot{p}_i C_i}{L} = \frac{1}{L} \left(
    \sum_{i=1}^{N} \frac{C_i}{C_i} \dot{C}_i p_i
    - \sum_{i=1}^{N} C_i p_i \hat{L}
    + \sum_{i=1}^{N} \dot{p}_i C_i \right),
\end{align}
or, multiplying the first and last terms by $1 = \frac{\sum_{k=1}^{N} C_k p_k} {\sum_{k=1}^{N} C_k p_k} = \frac{\fc}{\fc}$, 
\begin{align}
\dot{w}=\frac{1}{L} \left(
    \frac{S}{S} 
    \sum_{i=1}^{N} \frac{C_i}{C_i} \dot{C}_i p_i
    - S\hat{L}
    + \sum_{i=1}^{N} \dot{p}_i C_i \frac{\fc} {\fc} \right).
\end{align}
Factoring out $S$ and multiplying the last term by $p_i/p_i = 1$,
\begin{align}
\dot{w}= \frac{S}{L} \left(
    \sum_{i=1}^{N} \frac{\dot{C}_i}{C_i}  \frac{C_i}{S}
    - \hat{L} +  \sum_{i=1}^{N} \frac{\dot{p}_i }{p_i} \frac{p_iC_i}{S}\right).
 \end{align}
Using the accounting identity again to see that $\frac{S}{L} = w$ and the definitions of the growth rate of labour $\hat{L}$, prices $\hat{p}$ and consumption $\hat{C}$, we can write this as 
 \begin{align}
\dot{w} &= w\left(\boldsymbol{\tilde{c}} \cdot \boldsymbol{\hat{p}}  - \hat{L} + \hat{\boldsymbol{C}} \cdot \boldsymbol{\tilde{c}} \right),
\end{align}
which rearranges to:
\begin{align}
 - \boldsymbol{\hat{p}} \cdot  \boldsymbol{\tilde{c}} + \frac{\dot{w}}{w} = - \hat{L} + \hat{\boldsymbol{C}} \cdot \boldsymbol{\tilde{c}}.
\end{align}
From Lemma \ref{l:g=CC-r}, $\g = - \hat{L} + \hat{\boldsymbol{C}} \cdot \boldsymbol{\tilde{c}}$, and by definition $\hat{r} = \boldsymbol{\hat{p}} \cdot  \boldsymbol{\tilde{c}} - \frac{\dot{w}}{w}$, so 
\begin{align}
 \g &= -\hat{r}. \qedhere
\end{align}
\end{proof}

\begin{lemma}
For the IO model in Section \ref{S:StandardIO} with assumptions \textbf{A1}-\textbf{A4}, the weighted average multiplier $\bar{\mathcal{L}}$ can be expressed as $\bar{\mathcal{L}} = \frac{\mathcal{M}}{\fc}$ \label{l:OMavrg}.
\end{lemma}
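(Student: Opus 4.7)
The plan is to express each industry's revenue $m_i$ as a Leontief-inverse linear combination of the nominal consumption values $C_j p_j$, then sum over $i$ and factor out the output multipliers. Starting from the monetary output balance under perfect competition (\textbf{A3}), industry $i$'s revenue decomposes as $m_i = C_i p_i + \sum_j m_{ji}$, its receipts from households plus its receipts from all other industries. Since $a_{ji}$ is the share of $j$'s spending on input $i$, and since \textbf{A3} identifies $j$'s total spending with its revenue $m_j$, the intermediate flow satisfies $m_{ji} = a_{ji}\,m_j$, so the balance becomes the linear recursion $m_i = C_i p_i + \sum_j a_{ji} m_j$.

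Stacking this into vectors gives $\boldsymbol{m} = \boldsymbol{C}_{\$} + \boldsymbol{A}^T \boldsymbol{m}$, where $\boldsymbol{C}_{\$} \equiv (C_i p_i)_i$. Substochasticity of $\boldsymbol{A}$ (already established in Section~\ref{S:StandardIO}) makes $\boldsymbol{I} - \boldsymbol{A}^T$ invertible with inverse $\boldsymbol{H}^T$, so $\boldsymbol{m} = \boldsymbol{H}^T \boldsymbol{C}_{\$}$. Pre-multiplying by $\boldsymbol{1}^T$ then yields
\begin{equation*}
\mathcal{M} = \boldsymbol{1}^T \boldsymbol{m} = \boldsymbol{1}^T \boldsymbol{H}^T \boldsymbol{C}_{\$} = (\boldsymbol{H}\boldsymbol{1})^T \boldsymbol{C}_{\$} = \boldsymbol{\OM}^T \boldsymbol{C}_{\$} = \sum_i \OM_i C_i p_i,
\end{equation*}
and dividing by $\fc$ while recognising $C_i p_i/\fc = \tilde{c}_i$ gives $\mathcal{M}/\fc = \sum_i \OM_i \tilde{c}_i = \bar{\OM}$, which is the lemma.

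The one delicate step is accounting for the transpose: the spending-share matrix $\boldsymbol{A}$ is organised by row (the buyer), whereas the revenue recursion aggregates by column (the seller), so the system naturally involves $\boldsymbol{A}^T$ and hence $\boldsymbol{H}^T$. That transpose is exactly what converts $\boldsymbol{1}^T \boldsymbol{H}^T$ into the row-sum vector $(\boldsymbol{H}\boldsymbol{1})^T = \boldsymbol{\OM}^T$, letting aggregate revenue be read off as a consumption-weighted output multiplier. Once the orientation of the flows and the role of \textbf{A3} in closing the spending-revenue loop are pinned down, the remainder is a one-line manipulation of the Leontief inverse.
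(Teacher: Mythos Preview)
Your proof is correct and shares the paper's core identity: both derive $\boldsymbol{m}=\boldsymbol{H}^{T}\boldsymbol{S}$ (equivalently $\boldsymbol{m}^{T}=\boldsymbol{S}^{T}\boldsymbol{H}$) from the industry revenue balance under \textbf{A3}. The only difference is in the last step: you dot this identity with $\boldsymbol{1}$ directly to obtain $\mathcal{M}=\boldsymbol{\OM}^{T}\boldsymbol{C}_{\$}$, whereas the paper first dots with a generic $\boldsymbol{\gamma}$ (because it is simultaneously proving Theorem~\ref{t:OMg}) and then specialises to $\boldsymbol{\gamma}=\tilde{\gamma}\,\boldsymbol{1}$ to isolate the lemma---your route is the more direct of the two.
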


\begin{proof}[Proof of Lemma \ref{l:OMavrg} and Theorem \ref{t:OMg}]
Consider the conservation of mass for industries in eq. \eqref{e:COM_industries},
\begin{align}
 \sum_{j=1}^n m_{ji} + C_i  p_i &= \displaystyle\sum_{j=1}^N m_{ij} + L_i w,
\end{align} 
where the left-hand side is industry revenue $m_i$. Then normalising the first term on the right hand side to obtain the coefficient of the adjacency matrix
\begin{align}
m_i &=  \sum_{j=1}^N \frac{X_{ji} p_i}{p_j X_j} (X_j p_j) + S_i = \sum_{j=1}^N a_{ji} (X_j p_j) + S_i,
\end{align} 
which in matrix form is
\begin{align}
\boldsymbol{m} &= \boldsymbol{A}^{T} \boldsymbol{m} + \boldsymbol{S}.
\end{align} 
This rearranges to
\begin{align}
\boldsymbol{m} &= (\boldsymbol{I} - \boldsymbol{A}^{T})^{-1} \boldsymbol{S},
\end{align} 
which if we take the transpose on both sides is
\begin{align}
\boldsymbol{m}^{T} &= \boldsymbol{S}^{T}(\boldsymbol{I} - \boldsymbol{A})^{-1}= \boldsymbol{S}^{T} \boldsymbol{H}.
\end{align} 
We look to insert weighted productivity improvements $\tilde{\gamma} \equiv \boldsymbol{\gamma} \cdot \boldsymbol{m}/\mathcal{M}$ into this expression. Note that dividing $\boldsymbol{S}$ by $\fc$ and using eq. \eqref{e:COM_households} gives components $S_i / \fc = C_i p_i / \sum_{j=1}^N C_j p_j = C_i p_i/(Lw) = \tilde{c}_i$, so that
\begin{align}
\boldsymbol{m}^{T}\frac{1}{\fc} &= \boldsymbol{\tilde{c}}^{T} \cdot \boldsymbol{H},
\end{align} 
by taking the dot product with $\boldsymbol{\gamma}$ and some manipulation, we obtain
\begin{align}
\frac{1}{\fc}  \boldsymbol{m}^{T} \cdot \boldsymbol{\gamma} &= \boldsymbol{\tilde{c}}^{T} \cdot \boldsymbol{H} \cdot \boldsymbol{\gamma}, \\
\frac{1}{\fc} \boldsymbol{m}^{T} \cdot \boldsymbol{\gamma} \frac{\mathcal{M}}{\mathcal{M}}&= \boldsymbol{\tilde{c}}^{T} \cdot \boldsymbol{H} \cdot \boldsymbol{\gamma}, \\
\frac{\mathcal{M}}{\fc} \tilde{\gamma} &= \boldsymbol{\tilde{c}}^{T} \cdot \boldsymbol{H} \cdot \boldsymbol{\gamma}. \label{e:O/Ygamma}
\end{align}
To find $\g$ in this expression, we now prove Lemma \ref{l:OMavrg}. We can express the right-hand side of eq. \eqref{e:O/Ygamma} in terms of $\bar{\mathcal{L}}$ by noting that eq. \eqref{e:O/Ygamma} holds for any $\boldsymbol{\gamma}$. Thus we can choose $\boldsymbol{\gamma} = \tilde{\gamma} \boldsymbol{1}$, such that
\begin{align}
\boldsymbol{\tilde{c}}^{T} \cdot \boldsymbol{H} \cdot \boldsymbol{1}\tilde{\gamma} &=\frac{\mathcal{M}}{\fc} \tilde{\gamma}
\end{align}
which dividing by $\tilde{\gamma}$ is
\begin{align}
\boldsymbol{\tilde{c}}^{T} \cdot \boldsymbol{H} \cdot \boldsymbol{1} = \bar{\mathcal{L}} = \frac{\mathcal{M}}{\fc},
\end{align}
proving Lemma \ref{l:OMavrg}. We substitute $\frac{\mathcal{M}}{\fc}$ for $\bar{\mathcal{L}}$ in eq. \eqref{e:O/Ygamma} so that
\begin{equation}
\bar{\mathcal{L}} \tilde{\gamma} = \boldsymbol{\tilde{c}}^{T} \cdot \boldsymbol{H} \cdot \boldsymbol{\gamma}.
\end{equation}
Finally, we can write this in terms of $\g$ on the right hand side by applying Theorem \ref{t:growthasdeflation} ($\g = - \hat{r}$), the definition of real inflation ($\hat{r} = \boldsymbol{\tilde{c}} \cdot \boldsymbol{\hat{r}}$), and Theorem \ref{t:pricechanges} ($\hat{r}= -\boldsymbol{H} \cdot \boldsymbol{\gamma}$), which gives
\begin{equation}
    \bar{\mathcal{L}} \tilde{\gamma}=\g. \qedhere
\end{equation}
\end{proof}

\subsection{Proofs on the model of interdependent productivity growth}\label{TFP_Appendix}
\begin{proof}[Proof of Proposition \ref{t:breakhulten}]
First, note that the response of $\gamma_{0,i}$ for $i = 1,...,N$ to industrial policy targetting $\ind_j$ is 
\begin{align}
    \pdv{\gamma_{0,i}}{\lambda_j} = \alpha \pdv{}{\lambda_j}  \sum_{k=1}^N H_{\beta,ik} \lambda_k = \alpha H_{\beta,ij}.\label{e:gammalambda}
\end{align}
In words, a change in endogenous factors $\lambda_j$ in the $j$-th industry affects all industries through direct links and indirect paths. In fact, when $\beta$ tends to $1$, this change is proportional to the sum of all walks between industry-vertex $i$ and $j$.

From Lemma \ref{t:OMg},
\begin{equation}
    \g = \tilde{\gamma} \bar{\mathcal{L}},
\end{equation}
in which we expand the definition of $\tilde{\gamma} = \boldsymbol{\gamma} \cdot \boldsymbol{m}/\mathcal{M}$, for $\boldsymbol{m}$ industry revenues and $\mathcal{M}$ the sum of all industry revenues. Then at the steady state:
\begin{equation}
    \g = \sum_{k=1}^N \gamma_{0,k} m_k \frac{\bar{\mathcal{L}}}{\mathcal{M}}.
\end{equation}
Taking the partial derivative with respect to $\lambda_j$,
\begin{align}
    \pdv{\g}{\lambda_j} &= \sum_{k=1}^N \pdv{}{\lambda_j} \gamma_{0,k} m_k \frac{\bar{\mathcal{L}}}{\mathcal{M}} = \sum_{k=1}^N m_k \frac{\bar{\mathcal{L}}}{\mathcal{M}} \pdv{\gamma_{0,k}}{\lambda_j} +  \gamma_{0,k} \pdv{}{\lambda_j} m_k \frac{\bar{\mathcal{L}}}{\mathcal{M}}, 
\end{align}
applying the result in eq. \eqref{e:gammalambda},
\begin{align}
    \pdv{\g}{\lambda_j} &= \alpha \sum_{k=1}^N H_{\beta,ij} m_k \frac{\bar{\mathcal{L}}}{\mathcal{M}}  +  \gamma_{0,k} \pdv{}{\lambda_j} m_k \frac{\bar{\mathcal{L}}}{\mathcal{M}}, 
\end{align} 
and using Lemma \ref{l:OMavrg} ($\bar{\mathcal{L}} = \frac{\mathcal{M}}{\fc}$),
\begin{align}
    \pdv{\g}{\lambda_j} &= \alpha \sum_{k=1}^N H_{\beta,kj} \frac{m_k}{\fc}  +  \gamma_{0,k} \pdv{}{\lambda_j} \frac{m_k}{\fc}.
\end{align} 
Finally, substituting $\frac{m_k}{gdp}$ by the definition of $\ind_k$'s Domar weight,
\begin{align}
    \pdv{\g}{\lambda_j} &= \alpha \sum_{k=1}^N \underbrace{H_{\beta,ij} \theta_k  }_{(1)}+  \underbrace{\gamma_{0,k} \pdv{\theta_k}{\lambda_j}}_{(2)},
\end{align} 
gives the required result.
\end{proof}

\subsection{Proof of the invertability of the community matrix for the Lotka-Volterra system}\label{Appendix_LV_Invertibility}
\begin{definition}
    A matrix $\boldsymbol{O}$ ``is called \defn{stable} if all its eigenvalues have negative real part." \citep{Allesina2022} 
    \end{definition}
    \begin{definition}
    Let $\boldsymbol{D}(\boldsymbol{y})$ be the diagonal matrix with $\boldsymbol{y}$ as the diagonal, $\boldsymbol{y}$ an $n$-dimensional vector with strictly positive coefficients. A matrix $\boldsymbol{O}$ is \defn{diagonally stable} if $\boldsymbol{D}(\boldsymbol{y}) \boldsymbol{O}$ is stable. \citep{Allesina2022}
    \end{definition}
    \begin{theorem}
    If a matrix $\boldsymbol{O}$ is ($i$) stable with ($ii$) non-negative off-diagonal elements, and ($iii$) negative diagonal elements, then $\boldsymbol{O}$ is diagonally stable \citep{DattaStability}.
    \label{t:diagonalstabilty}
    \end{theorem}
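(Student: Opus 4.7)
The plan is to reduce Theorem \ref{t:diagonalstabilty} to classical results on non-singular M-matrices via the sign flip $\boldsymbol{M} = -\boldsymbol{O}$, and then to invoke a standard Lyapunov-type diagonal stability statement for M-matrices. The reduction is essentially bookkeeping: under hypothesis (ii) the matrix $\boldsymbol{M}$ has non-positive off-diagonal entries, under (iii) its diagonal is strictly positive, so $\boldsymbol{M}$ is a Z-matrix; under (i) every eigenvalue of $\boldsymbol{O}$ has negative real part, hence every eigenvalue of $\boldsymbol{M}$ has positive real part. By the spectral characterisation of M-matrices (Theorem 2 of \citep{Plemmons1977M-matrixM-matrices}, already used in the body), $\boldsymbol{M}$ is therefore a non-singular M-matrix.

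The substantive step is to apply the classical result that any non-singular M-matrix $\boldsymbol{M}$ admits a strictly positive vector $\boldsymbol{y}$ with $\boldsymbol{D}(\boldsymbol{y})\boldsymbol{M} + \boldsymbol{M}^{T}\boldsymbol{D}(\boldsymbol{y})$ positive definite. This is the content of the Datta-type theorems cited in \citep{DattaStability}, and one standard construction takes $\boldsymbol{y}$ to solve $\boldsymbol{M}\boldsymbol{y}>\boldsymbol{0}$ componentwise (which exists because $\boldsymbol{M}^{-1}\geq\boldsymbol{0}$). Undoing the sign flip yields a positive diagonal $\boldsymbol{D}(\boldsymbol{y})$ with $\boldsymbol{D}(\boldsymbol{y})\boldsymbol{O} + \boldsymbol{O}^{T}\boldsymbol{D}(\boldsymbol{y})$ negative definite.

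A routine Lyapunov argument then closes the proof: if $\lambda$ is any (possibly complex) eigenvalue of $\boldsymbol{D}(\boldsymbol{y})\boldsymbol{O}$ with eigenvector $\boldsymbol{v}$, then
\begin{equation}
2\operatorname{Re}(\lambda)\,\|\boldsymbol{v}\|^{2} \;=\; \boldsymbol{v}^{*}\bigl(\boldsymbol{D}(\boldsymbol{y})\boldsymbol{O} + \boldsymbol{O}^{T}\boldsymbol{D}(\boldsymbol{y})\bigr)\boldsymbol{v} \;<\; 0,
\end{equation}
so $\operatorname{Re}(\lambda)<0$. Every eigenvalue of $\boldsymbol{D}(\boldsymbol{y})\boldsymbol{O}$ therefore has negative real part, meaning $\boldsymbol{D}(\boldsymbol{y})\boldsymbol{O}$ is stable, and hence $\boldsymbol{O}$ is diagonally stable in the sense of the definition given in the appendix.

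The main obstacle is the middle step: the existence of a Lyapunov diagonal $\boldsymbol{D}(\boldsymbol{y})$ for a non-singular M-matrix is not elementary, as it rests on the non-trivial fact that $\boldsymbol{M}^{-1}\geq\boldsymbol{0}$ together with a delicate quadratic-form argument going back to Barker, Berman and Plemmons. Since the paper already credits \citep{DattaStability} for exactly this conclusion, the cleanest exposition is to quote that reference, using the reduction in the first paragraph and the Lyapunov argument in the third as bridges between the \emph{stable plus sign-pattern} hypotheses stated here and the usual M-matrix formulation found in the linear-algebra literature.
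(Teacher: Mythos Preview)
The paper does not actually prove this theorem: it is quoted from the cited reference and then applied as a black box to the matrix $\beta\boldsymbol{A}-\boldsymbol{I}$. Your proposal therefore supplies strictly more than the paper does, and what you supply is correct. The reduction $\boldsymbol{M}=-\boldsymbol{O}$ turns hypotheses (ii)--(iii) into the Z-matrix sign pattern and hypothesis (i) into positivity of the real parts of the spectrum, which is one of the equivalent characterisations of a non-singular M-matrix; the Lyapunov diagonal stability of such matrices is precisely the classical result the paper is citing. Your closing inequality then correctly translates the conclusion back into the paper's slightly non-standard definition (stability of $\boldsymbol{D}(\boldsymbol{y})\boldsymbol{O}$ rather than negative definiteness of $\boldsymbol{D}(\boldsymbol{y})\boldsymbol{O}+\boldsymbol{O}^{T}\boldsymbol{D}(\boldsymbol{y})$).

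One small caution on your parenthetical: choosing $\boldsymbol{y}$ with $\boldsymbol{M}\boldsymbol{y}>\boldsymbol{0}$ componentwise is necessary but not by itself sufficient to make $\boldsymbol{D}(\boldsymbol{y})\boldsymbol{M}+\boldsymbol{M}^{T}\boldsymbol{D}(\boldsymbol{y})$ positive definite; the full argument (due to Barker--Berman--Plemmons) requires a bit more, typically combining a row-scaling and a column-scaling so that both $\boldsymbol{M}\boldsymbol{y}>\boldsymbol{0}$ and $\boldsymbol{M}^{T}\boldsymbol{z}>\boldsymbol{0}$ hold and then taking the diagonal from their componentwise ratio. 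You already flag this step as the non-elementary one and defer to the literature, exactly as the paper does, so the overall argument stands.
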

    To meet the conditions of Theorem \ref{t:diagonalstabilty}, we will need the following lemma. 
    
\begin{lemma}
Let $\leigen(\boldsymbol{O})$ be the largest of $n$ eigenvalues of a matrix $\boldsymbol{O}$ in $\mathbb{R}^{n \times n}$. Then for $k_2$ in $\rp$ and $k_1$ in $\mathbb{R}$, $\leigen(k_2 \boldsymbol{O} - k_1 \boldsymbol{I}) = k_2 \leigen(\boldsymbol{O}) - k_1$. \label{L:spectralraiduslinear}
\end{lemma}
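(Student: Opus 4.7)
The plan is to observe that this is purely a spectral statement about an affine transformation of a matrix, and to prove it by pushing eigenvectors through the map $\boldsymbol{O} \mapsto k_2 \boldsymbol{O} - k_1 \boldsymbol{I}$. First, I would fix an eigenpair $(\lambda, \boldsymbol{v})$ of $\boldsymbol{O}$, so $\boldsymbol{O}\boldsymbol{v} = \lambda \boldsymbol{v}$, and compute directly
\begin{equation}
(k_2 \boldsymbol{O} - k_1 \boldsymbol{I})\boldsymbol{v} = k_2 \boldsymbol{O}\boldsymbol{v} - k_1 \boldsymbol{v} = (k_2 \lambda - k_1)\boldsymbol{v},
\end{equation}
so $\boldsymbol{v}$ is still an eigenvector, now with eigenvalue $k_2 \lambda - k_1$. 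Conversely, any eigenvector of $k_2 \boldsymbol{O} - k_1 \boldsymbol{I}$ with eigenvalue $\mu$ is an eigenvector of $\boldsymbol{O}$ with eigenvalue $(\mu + k_1)/k_2$, establishing a bijection between the spectra via the affine map $\lambda \mapsto k_2 \lambda - k_1$.

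Next, I would argue that this bijection preserves the ordering used by $\leigen$. Since $\leigen$ is compared across real numbers (either because the relevant eigenvalues are real, as in the paper's substochastic setting via Perron--Frobenius, or because $\leigen$ selects the eigenvalue of largest real part in the sense of Definition~1's stability notion), and since $k_2 \in \rp$ is strictly positive, the map $\lambda \mapsto k_2 \lambda - k_1$ is strictly monotone increasing on real parts. Therefore the maximum is attained at the image of the original maximiser, giving
\begin{equation}
\leigen(k_2 \boldsymbol{O} - k_1 \boldsymbol{I}) = k_2 \leigen(\boldsymbol{O}) - k_1,
\end{equation}
as claimed.

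The only subtle point, and the step I would dwell on, is making sure the ordering convention behind $\leigen$ is compatible with the affine map. For complex eigenvalues the statement requires interpreting $\leigen$ either via largest real part or via an assumption that the relevant spectrum is real; in the paper's intended application $\boldsymbol{O} = \boldsymbol{A}$ is substochastic and nonnegative, so Perron--Frobenius guarantees the maximal eigenvalue is real and this technicality does not bite. Beyond this, the proof is a one-line computation followed by monotonicity of an affine map with positive slope.
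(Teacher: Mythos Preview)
Your proposal is correct and follows essentially the same approach as the paper: push eigenvectors through the affine map $\boldsymbol{O}\mapsto k_2\boldsymbol{O}-k_1\boldsymbol{I}$ to identify the spectra, then use $k_2>0$ to preserve the ordering. Your inclusion of the converse direction and the remark on the ordering convention for complex eigenvalues are minor refinements that make the argument slightly cleaner than the paper's version, which tacitly assumes a full set of linearly independent eigenvectors.
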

\begin{proof} 
Starting from the eigenvalues of $\boldsymbol{O}$, we want to show that the largest eigenvalue of $(k_2 \boldsymbol{O} - k_1 \boldsymbol{I})$ is $(k_2 \leigen(\boldsymbol{O}) - k_1)$. 
Let $\boldsymbol{\omega}_i$ be any of the $n$ eigenvectors of $\boldsymbol{O}$ with corresponding eigenvalue\footnote{notation $\kappa$ is chosen for consistency, but apologies are due to the reader for the similarity to the scalars $k_1$, $k_2$.} $\kappa_i$, for $i$ in $\{1,...,n\}$. Then:
\begin{align}
\boldsymbol{O}\boldsymbol{\omega}_i = \kappa_i \boldsymbol{\omega}_i,
\end{align}
multiplying by $k_2$ and subtracting by  $k_1 \boldsymbol{\omega}_i$:
\begin{align}
(k_2\boldsymbol{O} - k_1 \boldsymbol{I})\boldsymbol{\omega}_i = (\kappa_i - k_1) \boldsymbol{\omega}_i,
\end{align}
so $k_2 \kappa_i - k_1$ is an eigenvalue of $\boldsymbol{O}$. Since, $\{\boldsymbol{\omega}_1, ... ,  \boldsymbol{\omega}_n\}$ are linearly independent, so are the eigenvectors in the set $\{(\kappa_1 - k_1) \boldsymbol{\omega}_1, ...,(\kappa_n - k_1) \boldsymbol{\omega}_n\}$. This set then spans $(\boldsymbol{O} - k_1 \boldsymbol{I})$ since this matrix is $n$-dimensional. Hence, $\{ k_2 \kappa_1 - k_1, ... ,k_2 \leigen(\boldsymbol{O}) - k_1 \}$ is the full set of eigenvalues of  $(\boldsymbol{O} - k_1 \boldsymbol{I})$. Ordering $\kappa_1 < ... < \kappa_{n-1}< \leigen(\boldsymbol{O})$ implies that $k_2 \kappa_1 - k_1 < ... < k_2 \leigen(\boldsymbol{O}) - k_1$ since $k_2 > 0$. Hence $k_2 \leigen(\boldsymbol{O}) - k_1$ is the largest eigenvalue of $k_2 \boldsymbol{O} - k_1 \boldsymbol{I}$. 
\end{proof}

If the three conditions of Theorem \ref{t:diagonalstabilty} are met by the Lotka-Volterra system of eq. \eqref{e:gammaLV}, we can infer the stability of $\boldsymbol{\gamma}_0$ by the diagonal stability of $(\beta \boldsymbol{A} - \boldsymbol{I})$. We check each in turn. The condition $(ii)$ is satisfied by the fact that $\boldsymbol{A}$ is a row-stochastic matrix, so has positive off-diagonal elements. Condition ($ii$) is satisfied by the fact that $\beta < 1$, so $\beta a_{ii} - 1 < 0$. For condition ($i$), we use Lemma \ref{L:spectralraiduslinear}. Because $\boldsymbol{A}$ is substochastic, its spectral radius is less than $1$ \citep{Carvalho2019}. Then the largest eigenvalue of $(\beta\boldsymbol{A} - \boldsymbol{I})$ is less that $\beta - 1 < 0$. Hence, $(\beta\boldsymbol{A} - \boldsymbol{I})$ is indeed stable. Then, applying Theorem \ref{t:diagonalstabilty}, $(\beta\boldsymbol{A} - \boldsymbol{I})$ must be diagonally stable. It follows that perturbations to the steady state $\boldsymbol{\gamma}_0$ is stable, that is, all perturbations from $\boldsymbol{\gamma}_0$ decay in time to a first order approximation.

\backmatter

\bmhead{Acknowledgments}

Acknowledgments are not compulsory. Where included they should be brief. Grant or contribution numbers may be acknowledged.

Please refer to Journal-level guidance for any specific requirements.

\section*{Declarations}

\subsection*{Funding}
No funds, grants, or other support was received.

\subsection*{Competing Interests}
The authors declare no competing interests.

\subsection*{Data Availability}
There is no data or code associated with this study.

\bibliography{references}

\end{document}